\newtheorem{Thm}{Theorem}
\newtheorem{Lem}{Lemma}
\newtheorem{Corr}{Corrollary}
\title{Distributed Compressive Sensing: Performance Analysis with\\Diverse Signal Ensembles}
\name{Sung-Hsien Hsieh$^{*,**}$, Wei-Jie Liang$^{*,***}$, Chun-Shien Lu$^{*}$, and Soo-Chang Pei$^{**}$}
\address{$^{*}$Institute of Information Science \& CITI, Academia Sinica, Taipei, Taiwan\\
$^{**}$Graduate Inst. Comm. Eng., National Taiwan University, Taipei, Taiwan\\
$^{***}$Department of Mathematics, National Cheng-Kung University, Tainan, Taiwan}
\begin{document}

\maketitle

\begin{abstract}

Distributed compressive sensing is a framework considering jointly sparsity within signal ensembles along with multiple measurement vectors (MMVs).
The current theoretical bound of performance for MMVs, however, is derived to be the same with that for single MV (SMV)  because no assumption about signal ensembles is made.

In this work, we propose a new concept of inducing the factor called ``Euclidean distances between signals'' for the performance analysis of MMVs.
The novelty is that the size of signal ensembles will be taken into consideration in our analysis to theoretically show that MMVs indeed exhibit better performance than SMV.
Although our concept can be broadly applied to CS algorithms with MMVs, the case study conducted on a well-known greedy solver called simultaneous orthogonal matching pursuit (SOMP) will be explored in this paper.
We show that the performance of SOMP, when incorporated with our concept by modifying the steps of support detection and signal estimations, will be improved remarkably, especially when the Euclidean distances between signals are short.
The performance of modified SOMP is verified to meet our theoretical prediction.
\end{abstract}

\noindent
\begin{keywords}
Distributed compressive sensing, Joint sparsity, (Simultaneous) Orthogonal matching pursuit
\end{keywords}

\vspace{-5pt}
\section{Introduction}\label{sec:intro}
\subsection{Background \& Related Works}\label{sec:background}
Compressive sensing (CS) \cite{Donoho2006, Baraniuk2007} of sparse signals in achieving data acquisition and compression simultaneously has been extensively studied in the past few years.
Conventionally, given a measurement vector, CS shows that a sparse signal can be reconstructed via different solvers such as $\ell_{1}$-minimization \cite{Candes2006-F, Chen2008} or greedy approaches \cite{Tropp2007,Needella2009}.
To further reduce the number of measurements, distributed compressive sensing (DCS) \cite{Baron2005,Duarte2013} is a framework considering jointly sparsity within signal ensembles along with multiple measurement vectors (MMVs).

The model of MMVs is described as follows.
Let $X=[x^{1},x^{2},...,x^{L}]\in \mathbb{R}^{N \times L}$ be the signal ensembles, where $L \geq 1$  is the size of signal ensembles, and let $\Phi^{i} \in \mathbb{R}^{M \times N}$ for $1\leq i\leq L$ be a sensing matrix.
$X$ is called jointly $K$-sparse if $ \left| \bigcup_{i=1}^{L} supp\left(x^{i}\right) \right| = K$, where $supp\left(x^{i}\right)$ returns a support set of $x^{i}$ and $\left|\cdot \right|$ is the cardinality function.
Then, signal sampling is conducted via:
\begin{equation}
y^{i}=\Phi^{i} x^{i}.
\label{Eq: DCS}
\end{equation}
Another common formulation assumes $\Phi=\Phi^{1}=\ldots\Phi^{L}$.
Therefore, let $Y = [y^{1},y^{2},...,y^{L}]$, we have:
\begin{equation}
Y=\Phi X.
\label{Eq: C_MMV}
\end{equation}
Assume that $\Phi^{i}$'s for all $i$'s are drawn from i.i.d distribution.
The main difference between the above two formulations is that $rank(Y) = \min(L,M)$ in Eq. (\ref{Eq: DCS}) but $rank(Y) = \min(rank(X),M)$ in Eq. (\ref{Eq: C_MMV}).

DCS \cite{Duarte2013} shows the fundamental bounds on the number of noiseless measurements such that signals can be jointly recovered based on Eq. (\ref{Eq: DCS}).
In addition, DCS shows that supports can be detected correctly when $L\rightarrow \infty$.
In other words, DCS cannot accurately characterize the relationship between $L$ and the performances of solvers such as SOMP.
\cite{chen06, Jin2013, Hu15} focus on the performance analysis based on Eq. (\ref{Eq: C_MMV}) and show that the performance is proportional to $rank(Y)$ with noiseless measurements.
Nevertheless, when multiple sensors sense the same source with $x^{1}=x^{2}=\ldots=x^{L}$, the performance based on Eq. (\ref{Eq: C_MMV}) will be degraded into SMV due to $rank(Y)=1$.
Thus, we find that if the performance analysis in MMVs does not consider $rank(Y)$ as a factor, the analysis will be same as that in SMV.
For example, \cite{Determe2016} shows the performance of SOMP that is irrelevant to $L$, which is almost the same with OMP (a special case of SOMP with $L=1$) \cite{Mo2012}.
On the contrary, when $x^{1}=x^{2}=\ldots=x^{L}$, the performance based on Eq. (\ref{Eq: DCS}) still is improved \cite{Amelunxen14} since Eq. (\ref{Eq: DCS}) can be reformulated as the following SMV formulation (number of measurements is $ML$ instead of $M$):
\begin{equation}
\hat{y}=A \hat{x},
\label{Eq: stack_MMV}
\end{equation}
where $\hat{y} = \left[ y^{1} ; y^{2} ; \ldots ; y^{L}   \right] \in \mathbb{R}^{ML}$, $A=\left[ \Phi^{1}; \Phi^{2} ; \ldots ; \Phi^{L} \right] \in \mathbb{R}^{ML \times N}$, and $\hat{x}=x^{1}=\ldots=x^{L}$.

\subsection{Motivation}\label{sec:motivation}
The discussions so far motivate us to consider a question: how to characterize the performance of MMVs based on Eq. (\ref{Eq: DCS}), especially when $rank(Y)=1$ or when the relaxed assumption, ``Euclidean distances between signals are nonzero,'' is considered.
It is noted that the Euclidean distances considered in this paper include two parts:
one is $\|x^{i} - x^{*} \|_{2}$ for all $i$'s with $x^{*}= \frac{1}{L}\sum_{i=1}^{L} x^{i}$ and another one is $\|x^{i} \|_{2}$ for all $i$'s.
In particular, the relaxed assumption is practical and occurs in cooperative spectrum sensing \cite{Paysarvi-Hoseini2011}, where MMVs are obtained from different sensors to observe a single signal source (spectrum).
Under the circumstance, when the sensors are too close to each other, the observed signal spectra also are similar, implying that $\epsilon$ is small.

\subsection{Contributions}\label{sec:contributions}
In this paper, we are interested in the performance analysis of MMVs model in Eq. (\ref{Eq: DCS}) with a new factor  ``Euclidean distances between signals''.
Compared with previous works, the imposed factor will lead to the performance that is related to the size $L$ of signal ensembles.
We take SOMP as a case study here even our concept can be generally applied to other greedy algorithms.
More specifically, we present a new mechanism for support detection and derive the sufficient condition of correct support detection.
We show that when the Euclidean distance between signals are short or the signals have the same sign, the new mechanism outperforms conventional method remarkably.
In terms of signal estimation, individual sparse signal is conventionally estimated by its corresponding measurement vector.
In our work, however, we explore a strategy of estimating an individual signal from all measurement vectors and show that this strategy is potential to make support detection possible even when $M < K \leq ML$.


\section{Preliminaries}\label{sec:preliminaries}
For a matrix $H$, we denote its transpose by $H^T$ and its pseudo inverse matrix by $H^{\dag}$. For a set $V$ collecting indices, $H_{V}$ is a  submatrix formed by columns of $H$ with indices belonging to $V$.
 $\mathcal{P}(V)$ is the power set of $V$.
  For a vector $u$, the $i^{\scriptsize\mbox{th}}$ entry of $u$ is $u[i]$. $u_{V} \in \mathbb{R}^{|V|}$ is a vector formed by entries of $u$  with indices belonging to $V$.
$\left\|\cdot\right\|_p$ denotes the $\ell_p$-norm. $sign(u)$ extracts the sign of $u$. $abs(u)$ returns the absolute value of $u$.  In addition, denote $\Omega = \bigcup_{i=1}^{L} supp\left(x^{i}\right) $ as the ground truth of support set.
$\mathcal{N}(0,\sigma^{2})$ denotes a normal distribution with zero mean and variance $\sigma^{2}$.

\section{Main Results}\label{sec:our results}
To induce the new factor ``Euclidean distance between signals'' into the theoretical performance analysis of MMVs and see how many advantages we can have, we take SOMP as a case study here (but keep in mind that our idea can be generally applied to other greedy algorithms).
In the following procedure of SOMP, the steps of support detection and signal estimation contain the original ones ((a) and (c)) and the newly added one ((b) and (d)).
\vspace{-5pt}
\begin{enumerate}
\item Initialization: $t=1$, $S=\{ \  \}$, and $r^{i,t}=y^{i}$ for $i=1,\ldots,L$.
\item Support detection:
\begin{itemize}
\item[(a)] $ I =\arg\!\max_{i} u[i] \text{ with }   u =  \sum_{i=1}^{L} \left| (\Phi^{i})^{T}r^{i,t} \right|$
\item[(b)] $ I =\arg\!\max_{i} u[i] \text{ with }   u =  \left|\sum_{i=1}^{L}  (\Phi^{i})^{T}r^{i,t} \right|.$
\end{itemize}
\item Support update:  $S=S\bigcup \{I \}$.
\item Signal estimation:
\begin{itemize}
\item[(c)] $ \hat{x}^{i} = (\Phi^{i}_{S})^{\dag}y^{i} $ with $i=1,\ldots,L$
\item[(d)] $ \hat{x}^{i} = (A_{S})^{\dag}\hat{y} $ with $i=1,\ldots,L$.
\end{itemize}
\item Residual update:  $r^{i,t+1} = y^{i} - \Phi^{i}_{S}\hat{x}^{i}$ with $i=1,\ldots,L$.
\item If $t=K$, stop and output $\bar{x}^{i}_{S} = (\Phi^{i}_{S})^{\dag}y^{i} $ with $i=1,...,L$; otherwise, $t=t+1$ and go to Step 2.
\end{enumerate}

In the above procedure, SOMP-(a+c) denotes the traditional SOMP by choosing (a) as support detection and (c) as signal estimation.
In contrast, steps (b) and (d) are proposed to accommodate for the conditions that the Euclidean distances between signals are short or the signals have the same sign, as mentioned in Sec. \ref{sec:motivation}.
In the following, we will discuss SOMP-(a+c), SOMP-(b+c), and SOMP-(b+d), respectively.

We first explain why we present (b) as an alternative of (a) in certain situations.
In the first iteration of steps (a) and (b), we expect that $u[j]$ for $j \in \Omega$ is large enough to make support detection correct.
We derive the lower bounds of $u[j]$ for $j \in \Omega$ in steps (a) and (b), respectively, as follows:
$$
\begin{aligned}
 (a): u[j] =&\sum_{i=1}^{L} \left| (\Phi^{i}_{j})^{T}r^{i,1}\right| \\
 & = \sum_{i=1}^{L} \left| x^{i}[j] + \left((\Phi^{i}_{j})^{T}\Phi^{i}_{\Omega}-1\right)x^{i}_{\Omega} \right|\\
 & \geq \sum_{i=1}^{L} \left|x^{i}[j] \right| -   \sum_{i=1}^{L} \left|\left((\Phi^{i}_{j})^{T}\Phi^{i}_{\Omega}-1\right)x^{i}_{\Omega} \right| \\
 (b): u[j] = &\left|\sum_{i=1}^{L}  (\Phi^{i}_{j})^{T}r^{i}_{1}\right| \\
 & = \left|\sum_{i=1}^{L}x^{i}[j] +   \sum_{i=1}^{L}\left((\Phi^{i}_{j})^{T}\Phi^{i}_{\Omega}-1\right)x^{i}_{\Omega} \right|\\
 & \geq  \left|\sum_{i=1}^{L}x^{i}[j] \right| - \left|   \sum_{i=1}^{L}\left((\Phi^{i}_{j})^{T}\Phi^{i}_{\Omega}-i\right)x^{i}_{\Omega} \right| .
\end{aligned}
$$
We observe that if $sign(x^{1})=\ldots=sign(x^{L})$, we have $ \sum_{i=1}^{L} \left|x^{i}[j] \right| =  \left|\sum_{i=1}^{L} x^{i}[j] \right|$, and $ \sum_{i=1}^{L} \left|\left((\Phi^{i}_{j})^{T}\Phi^{i}_{\Omega}-1\right)x^{i} \right| \geq \left|   \sum_{i=1}^{L}\left((\Phi^{i}_{j})^{T}\Phi^{i}_{\Omega}-I\right)x^{i} \right| $.
It is easy to induce that (b) achieves more accurate support detection than (a) under the case that all signals have the same sign.
We will further integrate this assumption into our performances analysis later.

Second, we discuss why we present (d) as an alternative of (c).
In this paper, steps (b+d) is equivalent to solving Eq. (\ref{Eq: stack_MMV}) when $x^{1}=x^{2}\ldots=x^{L}$.
Compared with (c), (d) is potential to make support detection possible when $M < K \leq ML$ since the number of measurements in Eq. (\ref{Eq: stack_MMV}) is $ML$.
In other words, no matter what $L$ is, there are infinite solutions to the least square problem with $M < K$ and, thus, (c) fails to estimate the signal correctly.
On the other hand, when $\|x^{i} - x^{j} \| \leq \epsilon$ for all $i \neq j$ with small $\epsilon$, (b+d) is no longer formulated as SMV in Eq. (\ref{Eq: stack_MMV}).
On the contrary, we show that SOMP-(b+d) conducted with Eq. (\ref{Eq: DCS}) still works when $M < K \leq ML$.

To begin with the performance analyses of SOMP-(a+c), SOMP-(b+c), and SOMP-(b+d), we first introduce restricted isometric property (RIP) as follows.
\begin{Lem}(\textbf{Consequences of RIP})\cite{Candes2005}\\
Given a matrix $\Phi$, for $I \subset \Omega$, if $\delta_{|I|}(\Phi)<1$, then, for any $x\in \mathbb{R}^{|I|}$, we have
\begin{equation}
(1-\delta_{|I|}(\Phi))\|x\|_2 \leq \|\Phi_I^{T}\Phi_I x\|_2 \leq (1+\delta_{|I|}(\Phi))\|x\|_2.
\label{eq: RIP}
\end{equation}
\end{Lem}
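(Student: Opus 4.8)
The plan is to reduce the asserted bound on $\|\Phi_I^{T}\Phi_I x\|_2$ to an eigenvalue estimate for the Gram matrix $G := \Phi_I^{T}\Phi_I$. First I would recall the definition of the restricted isometry constant: $\delta_{|I|}(\Phi)$ is the smallest $\delta \geq 0$ for which $(1-\delta)\|z\|_2^2 \leq \|\Phi_I z\|_2^2 \leq (1+\delta)\|z\|_2^2$ holds for all $z \in \mathbb{R}^{|I|}$. Writing $\|\Phi_I z\|_2^2 = z^{T} G z$, this is precisely the statement that the Rayleigh quotient $z^{T}Gz/\|z\|_2^2$ lies in $[1-\delta_{|I|}(\Phi),\, 1+\delta_{|I|}(\Phi)]$ for every nonzero $z$. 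Since $G$ is symmetric and positive semidefinite, the Courant--Fischer min-max principle then shows that every eigenvalue $\lambda_1,\ldots,\lambda_{|I|}$ of $G$ lies in $[1-\delta_{|I|}(\Phi),\,1+\delta_{|I|}(\Phi)]$.

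Next I would invoke the hypothesis $\delta_{|I|}(\Phi) < 1$: it guarantees $\lambda_k \geq 1-\delta_{|I|}(\Phi) > 0$ for all $k$, so $G$ is positive definite and $|\lambda_k| = \lambda_k$. Taking the spectral decomposition $G = U\Lambda U^{T}$ with $U$ orthogonal and $\Lambda = \mathrm{diag}(\lambda_1,\ldots,\lambda_{|I|})$, for an arbitrary $x \in \mathbb{R}^{|I|}$ put $v = U^{T}x$, so that $\|v\|_2 = \|x\|_2$ and $\|Gx\|_2 = \|U\Lambda v\|_2 = \|\Lambda v\|_2$. Then $\|\Lambda v\|_2^2 = \sum_k \lambda_k^2 v_k^2$, and bounding each $\lambda_k^2$ between $(1-\delta_{|I|}(\Phi))^2$ and $(1+\delta_{|I|}(\Phi))^2$ gives $(1-\delta_{|I|}(\Phi))^2\|v\|_2^2 \leq \|\Lambda v\|_2^2 \leq (1+\delta_{|I|}(\Phi))^2\|v\|_2^2$. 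Taking square roots---legitimate because $1-\delta_{|I|}(\Phi) > 0$---and substituting $\|v\|_2 = \|x\|_2$ yields the claimed two-sided bound on $\|\Phi_I^{T}\Phi_I x\|_2$.

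The argument is essentially routine linear algebra, so there is no serious obstacle; the only points requiring care are the passage from the quadratic-form definition of the RIP to the eigenvalue bounds on $G$, and the observation that the hypothesis $\delta_{|I|}(\Phi) < 1$ is exactly what makes $1-\delta_{|I|}(\Phi)$ positive, so that one may pass from a bound on $\|\Phi_I^{T}\Phi_I x\|_2^2$ to a bound on $\|\Phi_I^{T}\Phi_I x\|_2$ without sign ambiguity. I would also remark for completeness that the restriction $I \subset \Omega$ is immaterial to the proof beyond fixing the sparsity level $|I|$; the inequality holds for any index set $I$ with $\delta_{|I|}(\Phi) < 1$.
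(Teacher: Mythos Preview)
Your proof is correct and follows the standard spectral-decomposition argument for this well-known consequence of RIP. The paper itself does not prove this lemma; it is stated as a cited result from \cite{Candes2005} without any accompanying proof, so there is no approach in the paper to compare against.
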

A measurement matrix $\Phi$ is said to satisfy RIP of order $K$ if there exists a restricted isometric constant (RIC) $\delta(\Phi) \in (0, 1) $ satisfying Eq. (\ref{eq: RIP}) for any $K$-sparse signal $x$.

According to RIC, OMP recovers all $K$-sparse vectors provided $\Phi$ satisfies the sufficient condition that $\delta_{K+1}(\Phi) < \frac{1}{\sqrt{K}+1}$ \cite{Mo2012, Wang2012}.
Similarity, traditional SOMP-(a+c) with signal ensembles sensed via Eq. (\ref{Eq: C_MMV}) needs to satisfy $\delta_{K+1}(\Phi) < \frac{1}{\sqrt{K}+1}$ or $\delta_{K}(\Phi) < \frac{\sqrt{K-1}}{\sqrt{K-1}+\sqrt{K}}$ \cite{Determe2016}.
Nevertheless, as mentioned in Sec. \ref{sec:background}, the sufficient condition never contains $L$ due to no assumption about signal ensembles was made.
In addition, it should be noted that this sufficient condition \cite{Determe2016} cannot be applied to SOMP-(a+c) with Eq. (\ref{Eq: DCS}).
On the other hand, DCS focuses on SOMP-(a+c) with Eq. (\ref{Eq: DCS}) \cite{Baron2005,Duarte2013} but it does not prove  such a sufficient condition.
Thus, in addition to conducting analyses for SOMP-(b+c) and SOMP-(b+d), we also provide analysis for SOMP-(a+c).

To induce $L$ into the sufficient condition of SOMP with signal ensembles being sensed via Eq. (\ref{Eq: DCS}), our main results are summarized as the following three theorems.

\begin{Thm}
\label{thm: SOMP-a-c}
Suppose $x^i \in \mathbb{R}^N$ is a $K-$sparse signal sensed via Eq. (\ref{Eq: DCS}) for $i = 1, \dots, L$ and $\Phi^{i}$'s satisfy RIP.
Then, the SOMP-(a+c) algorithm will perfectly reconstruct $x^{i}$'s if
\begin{equation}
\label{eq: sc of SOMP-a-c}
\sum_{i=1}^{L}\frac{\epsilon_{1}\delta_{K+1}^{2}(\Phi^{i})-(\sqrt{K}+2\epsilon_{1})\delta_{K+1}(\Phi^{i}) +\epsilon_{1}     }{1-\delta_{K+1}(\Phi^{i})}  > 0,
\end{equation}
where $\displaystyle  \epsilon_{1} = \max_{U \in \mathcal{P}(\Omega)\backslash{\emptyset} } \frac{\min_j\|x^{j}_{U}\|_2}{\max_j\|x^{j}_{U}\|_2}$.
\end{Thm}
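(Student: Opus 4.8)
\medskip
\noindent\textbf{Overall plan.}
I would run the classical ``a greedy step never leaves the true support'' argument, but keep the $L$ pairs $(\Phi^{i},x^{i})$ separate throughout and bring in $\epsilon_{1}$ only at the final bookkeeping step. First I would reduce to a one-step statement: by induction on $t$, if at the start of iteration $t$ the running set $S$ obeys $S\subseteq\Omega$ and $|S|=t-1<K$, then the index $I$ picked in Step~2(a) lies in $\Omega\setminus S$. Granting this for $t=1,\dots,K$, the algorithm terminates with $S=\Omega$; since each $\Phi^{i}$ satisfies the RIP (in particular $\delta_{K}(\Phi^{i})<1$), $\Phi^{i}_{\Omega}$ has full column rank, so the final step $\bar x^{i}_{S}=(\Phi^{i}_{S})^{\dag}y^{i}=(\Phi^{i}_{\Omega})^{\dag}\Phi^{i}_{\Omega}x^{i}_{\Omega}$ recovers $x^{i}$ exactly for every $i$, which is the claim. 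So I would fix such an iteration, set $\tilde\Omega:=\Omega\setminus S$ (hence $1\le|\tilde\Omega|\le K$), abbreviate $\delta^{i}:=\delta_{K+1}(\Phi^{i})$, and note that since $r^{i,t}\perp\mathrm{col}(\Phi^{i}_{S})$ we have $u[j]=0$ for $j\in S$, so it is enough to establish $\max_{j\in\tilde\Omega}u[j]>\max_{j\notin\Omega}u[j]$.

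\medskip
\noindent\textbf{The two one-step estimates.}
Writing $P^{i}_{S}$ for the orthogonal projector onto $\mathrm{col}(\Phi^{i}_{S})$ and $\beta^{i}:=(\Phi^{i}_{S})^{\dag}\Phi^{i}_{\tilde\Omega}x^{i}_{\tilde\Omega}$, one has $r^{i,t}=(I-P^{i}_{S})\Phi^{i}_{\tilde\Omega}x^{i}_{\tilde\Omega}=\Phi^{i}_{\tilde\Omega}x^{i}_{\tilde\Omega}-\Phi^{i}_{S}\beta^{i}$. I would use the standard consequences of the RIP (derivable from Lemma~1), namely $\|(\Phi^{i}_{P})^{T}\Phi^{i}_{Q}v\|_{2}\le\delta^{i}\|v\|_{2}$ for disjoint $P,Q$ with $|P|+|Q|\le K+1$ and $\|((\Phi^{i}_{S})^{T}\Phi^{i}_{S})^{-1}\|\le(1-\delta^{i})^{-1}$, which give $\|\beta^{i}\|_{2}\le\frac{\delta^{i}}{1-\delta^{i}}\|x^{i}_{\tilde\Omega}\|_{2}$. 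For a false index $j\notin\Omega$ (disjoint from both $\tilde\Omega$ and $S$) this yields
\[
|(\Phi^{i}_{j})^{T}r^{i,t}|\ \le\ |(\Phi^{i}_{j})^{T}\Phi^{i}_{\tilde\Omega}x^{i}_{\tilde\Omega}|+|(\Phi^{i}_{j})^{T}\Phi^{i}_{S}\beta^{i}|\ \le\ \delta^{i}\|x^{i}_{\tilde\Omega}\|_{2}+\delta^{i}\|\beta^{i}\|_{2}\ \le\ \frac{\delta^{i}}{1-\delta^{i}}\|x^{i}_{\tilde\Omega}\|_{2},
\]
hence $\max_{j\notin\Omega}u[j]\le\sum_{i=1}^{L}\frac{\delta^{i}}{1-\delta^{i}}\|x^{i}_{\tilde\Omega}\|_{2}$. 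For a true index $j\in\tilde\Omega$ I would instead peel off the ``diagonal'' term, $|(\Phi^{i}_{j})^{T}r^{i,t}|\ge|x^{i}[j]|-\frac{\delta^{i}}{1-\delta^{i}}\|x^{i}_{\tilde\Omega}\|_{2}$, and also record the $\ell_{2}$ version $\|(\Phi^{i}_{\tilde\Omega})^{T}r^{i,t}\|_{2}\ge(1-\delta^{i})\|x^{i}_{\tilde\Omega}\|_{2}$ (which follows from $(\Phi^{i}_{S})^{T}r^{i,t}=0$ together with $\langle x^{i}_{\tilde\Omega},(\Phi^{i}_{\tilde\Omega})^{T}r^{i,t}\rangle=\|r^{i,t}\|_{2}^{2}\ge(1-\delta^{i})\|x^{i}_{\tilde\Omega}\|_{2}^{2}$). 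Combining these over the at most $K$ positions in $\tilde\Omega$ gives a lower bound for $\max_{j\in\tilde\Omega}u[j]$ of the shape $\frac{1}{\sqrt K}$ times a weighted combination of the quantities $(1-\delta^{i})\|x^{i}_{\tilde\Omega}\|_{2}$.

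\medskip
\noindent\textbf{Assembling, and where the work is.}
Finally I would subtract the false-index upper bound from the true-index lower bound, factor out the common scale $\max_{j}\|x^{j}_{\tilde\Omega}\|_{2}$, and observe that the resulting per-matrix contribution is bounded below by a fixed multiple of $\epsilon_{1}(1-\delta^{i})-\frac{\sqrt K\,\delta^{i}}{1-\delta^{i}}$ --- the ratio $\epsilon_{1}=\max_{U}\frac{\min_{j}\|x^{j}_{U}\|_{2}}{\max_{j}\|x^{j}_{U}\|_{2}}$ appearing precisely to absorb the fact that the per-signal norms $\|x^{i}_{\tilde\Omega}\|_{2}$ enter the two sides with different effective weights. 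Requiring this to stay positive for every admissible $\tilde\Omega$ is then $\sum_{i=1}^{L}\big[\epsilon_{1}(1-\delta^{i})-\frac{\sqrt K\,\delta^{i}}{1-\delta^{i}}\big]>0$, and since the $i$-th bracket equals $\frac{\epsilon_{1}(1-\delta^{i})^{2}-\sqrt K\,\delta^{i}}{1-\delta^{i}}=\frac{\epsilon_{1}(\delta^{i})^{2}-(\sqrt K+2\epsilon_{1})\delta^{i}+\epsilon_{1}}{1-\delta^{i}}$, this is exactly condition~(\ref{eq: sc of SOMP-a-c}), closing the induction. I expect the hard part to be this assembling step rather than the two one-sided bounds, which are routine RIP manipulations: lining up the two sums over $i$ term by term so that the ensemble size $L$ genuinely enters, so that the positive contribution carries the factor $\epsilon_{1}$, and so that the $\sqrt K$ (rather than a coarser $K$) survives in the true-index estimate, is exactly where the ``Euclidean distance between signals'' has to be threaded through the calculation with care.
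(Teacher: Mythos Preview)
Your proposal follows essentially the same route as the paper's proof: reduce to a single greedy step with $S\subseteq\Omega$ and $U=\Omega\setminus S$, bound the false-index score by $\sum_i\big(\delta^i_{K+1}+\tfrac{(\delta^i_{K+1})^2}{1-\delta^i_{K+1}}\big)\max_j\|x^j_U\|_2$ (which is exactly your $\sum_i\tfrac{\delta^i}{1-\delta^i}\|x^i_{\tilde\Omega}\|_2$), bound the true-index score from below via $\|(\Phi^i_U)^Tr^i\|_2\ge(1-\delta^i_{K+1})\|x^i_U\|_2$ combined with the $\frac{1}{\sqrt{K}}$ passage from $\ell_\infty$ to $\ell_2$, replace $\|x^i_U\|_2$ by $\min_j\|x^j_U\|_2$, and then rearrange to obtain \eqref{eq: sc of SOMP-a-c}. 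The only cosmetic difference is that the paper extracts the per-signal inequality $\|(\Phi^i_U)^Tr^i\|_2\ge(1-\delta^i)\|x^i_U\|_2$ from its Lemma~2 on $\sigma_{\min}\big(A^T(I-BB^\dagger)A\big)$, whereas you obtain the same bound directly via $\langle x^i_{\tilde\Omega},(\Phi^i_{\tilde\Omega})^Tr^{i,t}\rangle=\|r^{i,t}\|_2^2$ and RIP; you also correctly flag the ``assembling'' step (pulling the $\tfrac{1}{\sqrt{K}}$ through the sum over $i$) as the delicate point, which is precisely where the paper's argument is tersest.
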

\begin{proof}
Please see Appendix for detailed proof.
\end{proof}

\begin{Thm}
\label{thm: SOMP-b-c}
Let $\displaystyle A=\frac{1}{\sqrt{L}}[\Phi^{1};\Phi^{2};\ldots;\Phi^{L}]$ and let  $\delta^{max}_{K}=\max_{i} \delta_{K}(\Phi^{i})$.
Suppose $x^i \in \mathbb{R}^N$ is a $K-$sparse signal sensed via Eq. (\ref{Eq: DCS}) for $i = 1, \dots, L$ and $\Phi^{i}$'s satisfy RIP.
Then, the SOMP-(b+c) algorithm will perfectly reconstruct $x^{i}$'s if
\begin{equation}
\label{eq: sc of SOMP-b-c}
(\sqrt{K}+1)\delta_{K+1}(A)+ (1+(\sqrt{K}+1)L\epsilon_2)\delta_{K+1}^{max} < 1,
\end{equation}
where $\displaystyle  \epsilon_{2} = \max_{U \in \mathcal{P}(\Omega)\backslash{\emptyset} } \frac{\sum_{i=1}^{L}\| x^{i}_{U} - x^{*}_{U}\|_{2} }{L\| x^{*}_{U} \|_{2}}$, $x^{*} = \frac{1}{L}\sum_{i=1}^{L}x^{i}$.
\end{Thm}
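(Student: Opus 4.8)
The plan is to adapt the standard RIP-based convergence argument for OMP/SOMP (in the spirit of \cite{Mo2012,Determe2016}) to step (b), using a \emph{common-plus-deviation} splitting of the ensemble that is matched to the definition of $\epsilon_2$. I would argue by induction on the iteration index $t$, with the invariant that at the start of iteration $t$ the current support obeys $S\subseteq\Omega$ and $|S|=t-1$; the inductive step is to show that the atom $I$ chosen by (b) lies in $\Omega\setminus S$. Granting this, after $K$ iterations $S=\Omega$, and since every term on the left of~(\ref{eq: sc of SOMP-b-c}) is nonnegative the hypothesis forces $\delta^{max}_{K+1}<1$, hence $\delta_K(\Phi^i)<1$ for all $i$; therefore each $\Phi^i_\Omega$ has full column rank and the output $\bar x^i_S=(\Phi^i_\Omega)^{\dag}y^i=x^i_\Omega$ is exact.

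The key reformulation is that, with $\hat r^{\,t}=\tfrac{1}{\sqrt L}[r^{1,t};\dots;r^{L,t}]$, step (b) is exactly the OMP atom-selection rule for the stacked system~(\ref{Eq: stack_MMV}): one has $u[j]=L\,|A_j^{T}\hat r^{\,t}|$, and $A_S^{T}\hat r^{\,t}=\tfrac1L\sum_i(\Phi^i_S)^{T}r^{i,t}=0$ because each per-signal residual satisfies $r^{i,t}\perp\operatorname{range}(\Phi^i_S)$, so $u$ vanishes on $S$ and no atom is re-selected. Because signal estimation here is (c) rather than (d), $\hat r^{\,t}$ is generally not in $\operatorname{range}(A_\Omega)$ once the $x^i$ differ, so rather than quoting an OMP lemma verbatim I would use the explicit residual $r^{i,t}=(I-P^i_S)\Phi^i_{\Omega\setminus S}x^i_{\Omega\setminus S}$ (valid as $S\subseteq\Omega$; $P^i_S$ is the projector onto $\operatorname{range}(\Phi^i_S)$), set $W:=\Omega\setminus S$, and split $x^i_W=x^*_W+(x^i_W-x^*_W)$ inside $L\,A_j^{T}\hat r^{\,t}=\sum_i(\Phi^i_j)^{T}(I-P^i_S)\Phi^i_W x^i_W$. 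The \emph{common} piece (all copies $x^*_W$) equals $L\,A_j^{T}(I-\widehat P_S)A_W x^*_W$ with $\widehat P_S=\operatorname{diag}(P^1_S,\dots,P^L_S)$, and is controlled by the RIP of $A$ (using $\operatorname{range}(A_S)\subseteq\operatorname{range}(\widehat P_S)$), yielding the $\delta_{K+1}(A)$ terms of~(\ref{eq: sc of SOMP-b-c}); the \emph{deviation} piece is bounded term by term via each $\Phi^i$'s RIP inequality~(\ref{eq: RIP}) on the index set $W\cup\{j\}$ (hence $\delta^{max}_{K+1}$) together with $\sum_i\|x^i_W-x^*_W\|_2\le L\epsilon_2\|x^*_W\|_2$, which is the definition of $\epsilon_2$ specialized to $U=W$. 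Lower-bounding $\max_{j\in W}u[j]$, upper-bounding $\max_{j\notin\Omega}u[j]$, using $\sqrt{|W|}\le\sqrt K$, and demanding the former exceed the latter then reduces, after dividing by $L\|x^*_W\|_2$, to~(\ref{eq: sc of SOMP-b-c}).

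I expect the main obstacle to be the deviation piece (and, at a general iteration, also the common piece), because the per-signal projectors $P^i_S$ differ in $i$ and do not commute with $\tfrac1L\sum_i$, so the residual is genuinely not a stacked-OMP residual. The way around it is to pass to the restricted matrices $(I-P^i_S)\Phi^i$ and invoke the standard bound relating their Gram submatrices to $\delta_{|S\cup W\cup\{j\}|}(\Phi^i)$ and $\delta_{|S|}(\Phi^i)$; the RIC bookkeeping must be done carefully so that the constants land on $\delta_{K+1}$ and $\delta^{max}_{K+1}$ rather than something looser, and one must verify that the maximum over all nonempty subsets of $\Omega$ in the definition of $\epsilon_2$ indeed covers every residual support $W=\Omega\setminus S$ that occurs along the run. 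The remaining ingredients — the two score bounds, the $\ell_2$-to-$\ell_\infty$ passage, and the exactness of the terminal least-squares estimate — are routine.
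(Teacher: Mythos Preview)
Your plan is essentially the paper's: induction on $t$ with $S\subseteq\Omega$, the $\ell_2$-to-$\ell_\infty$ passage giving the $\sqrt{K}$, a common-plus-deviation split $x^i_W=x^*_W+(x^i_W-x^*_W)$, the common part tied to $\delta_{K+1}(A)$ via $\sum_i(\Phi^i_W)^T\Phi^i_W=L\,A_W^TA_W$, and the deviation absorbed termwise by $\delta^{max}_{K+1}$ together with $\sum_i\|x^i_W-x^*_W\|\le L\epsilon_2\|x^*_W\|$. The only ordering difference is that the paper first peels off the projection correction $\sum_i(\Phi^i_\bullet)^T P^i_S\Phi^i_W x^i_W$ and bounds it termwise by $\tfrac{(\delta^{max}_{K+1})^2}{1-\delta^{max}_{K+1}}$, and only afterwards performs the common-plus-deviation split on the unprojected cross term; your initial hope that the full common piece $A_j^T(I-\widehat P_S)A_W x^*_W$ is governed by $\delta_{K+1}(A)$ alone via $\operatorname{range}(A_S)\subseteq\operatorname{range}(\widehat P_S)$ does not go through (the block-diagonal $\widehat P_S$ is not the projector onto $\operatorname{range}(A_S)$, so no RIP-of-$A$ lemma applies directly), but your obstacle paragraph already identifies this and proposes exactly the paper's remedy of handling the projector contribution via each $\Phi^i$'s RIC.
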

\begin{proof}
Please see Appendix for detailed proof.
\end{proof}

\begin{Thm}
\label{thm: SOMP-b-d}
Let $\displaystyle A=\frac{1}{\sqrt{L}}[\Phi^{1};\Phi^{2};\ldots;\Phi^{L}]$ and let  $\delta^{max}_{K}=\max_{i} \delta_{K}(\Phi^{i})$. Suppose $x^i \in \mathbb{R}^N$ is a $K-$sparse signal sensed via Eq. (\ref{Eq: DCS}) for $i = 1, \dots, L$, $A$ satisfies RIP, and $K\leq M$.
Then, the SOMP-(b+d) algorithm will perfectly reconstruct $x^{i}$'s with $i=1,\ldots,L$ if
\begin{equation}
\label{eq: sc of SOMP-b-d}
\sqrt{K}(1+L^{2}\epsilon_{3})\delta_{K+1}(A)+ (1+L\epsilon_{3})\delta_{K+1}^{max} < 1,
\end{equation}
where $\displaystyle \epsilon_{3} = \max_{U \in \mathcal{P}(\Omega)\backslash{\emptyset} } \frac{\sum_{i=1}^{L}\| x^{i} - x^{*}\|_{2} }{L\| x^{*}_{U} \|_{2}}$ with $x^{*} = \frac{1}{L}\sum_{i=1}^{L}x^{i}$.
\end{Thm}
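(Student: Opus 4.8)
The plan is to transplant, into the stacked setting, the inductive ``correct-support-then-least-squares'' template behind the RIP analyses of OMP in \cite{Mo2012} and of SOMP-(a+c) in \cite{Determe2016}, much as in the proof of Theorem~\ref{thm: SOMP-b-c}, and then to absorb the discrepancy between the $x^{i}$'s by a perturbation argument governed by $\epsilon_{3}$. Concretely, I would induct on the iteration index $t$ and show that whenever the current support $S$ (with $|S|=t-1$) satisfies $S\subseteq\Omega$, the index $I$ selected in Step~2(b) lies in $\Omega\setminus S$; the base case $S=\emptyset$ is immediate. Granted this, after $K$ iterations $S=\Omega$, and since $K\le M$ and $\delta_{K+1}^{max}<1$ (forced by Eq.~(\ref{eq: sc of SOMP-b-d})) every $\Phi^{i}_{\Omega}$ has full column rank, so the terminal estimate $\bar{x}^{i}_{S}=(\Phi^{i}_{S})^{\dag}y^{i}=(\Phi^{i}_{\Omega})^{\dag}\Phi^{i}_{\Omega}x^{i}_{\Omega}$ equals $x^{i}$, which is exact reconstruction.

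For the inductive step put $w=\sum_{i=1}^{L}(\Phi^{i})^{T}r^{i,t}$. The normal equations for $\hat{x}^{i}=(A_{S})^{\dag}\hat{y}$ in Step~4(d) give $\sum_{i}(\Phi^{i}_{S})^{T}r^{i,t}=0$, so $w$ vanishes on $S$ and it suffices to prove $\max_{j\in\Omega\setminus S}|w[j]|>\max_{j\notin\Omega}|w[j]|$. Writing $x^{i}=x^{*}+d^{i}$ with $d^{i}=x^{i}-x^{*}$ supported on $\Omega$ and $\sum_{i}\|d^{i}\|_{2}\le L\epsilon_{3}\|x^{*}_{U}\|_{2}$ for the extremal $U$, linearity of $(A_{S})^{\dag}$ in $\hat{y}$ splits $w=w^{\mathrm{com}}+w^{\mathrm{dev}}$, where $w^{\mathrm{com}}$ is the support-detection vector one would obtain if all signals equalled $x^{*}$, i.e.\ the OMP statistic for the single-vector system with matrix $A$ and signal $x^{*}$ (the normalization $A=\frac{1}{\sqrt L}[\Phi^{1};\ldots;\Phi^{L}]$ in the statement only rescales the stacked matrix so that its RIC is meaningful; the projections driving the algorithm are scale-invariant), and $w^{\mathrm{dev}}$ collects the $d^{i}$-driven terms. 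Using $\sum_{i}(\Phi^{i})^{T}\Phi^{i}=L\,A^{T}A$, the common part collapses to $w^{\mathrm{com}}=L\,A^{T}A\,v$ with $v=x^{*}-P_{S}\hat{x}^{\mathrm{com}}$, a least-squares residual whose $\Omega$-restriction has norm comparable to $\|x^{*}_{\Omega\setminus S}\|_{2}$.

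Then I would bound the two parts. For $w^{\mathrm{com}}$, applying the consequences of RIP (Eq.~(\ref{eq: RIP})) to $A$ at order $K+1$ and running the standard OMP estimate of \cite{Mo2012} shows that its $(\Omega\setminus S)$-entries beat its $\Omega^{c}$-entries with slack of order $\sqrt{K}\,\delta_{K+1}(A)\|x^{*}_{\Omega\setminus S}\|_{2}$; this yields the $\sqrt{K}\,\delta_{K+1}(A)$ contribution in Eq.~(\ref{eq: sc of SOMP-b-d}). For $w^{\mathrm{dev}}=\sum_{i}(\Phi^{i})^{T}\Phi^{i}d^{i}-L\,A^{T}A_{S}\hat{x}^{\mathrm{dev}}$, the first summand is bounded entrywise by $\delta_{K+1}^{max}\sum_{i}\|d^{i}\|_{2}\le L\epsilon_{3}\,\delta_{K+1}^{max}\|x^{*}_{U}\|_{2}$ through the RIP of the individual $\Phi^{i}$, which (after merging with the $w^{\mathrm{com}}$ lower bound) produces the $(1+L\epsilon_{3})\delta_{K+1}^{max}$ block; the second summand is controlled by bounding $\|\hat{x}^{\mathrm{dev}}\|_{2}$ via $\|(A_{S})^{\dag}\|$ and $\|\hat{y}^{\mathrm{dev}}\|_{2}\le\sqrt{1+\delta_{K}^{max}}\sum_{i}\|d^{i}\|_{2}$, then propagating it through $A^{T}A_{S}$ and the $\ell_{2}\!\to\!\ell_{\infty}$ loss on $\Omega\setminus S$ (the extra $\sqrt{K}$), the aggregation over the $L$ blocks giving the $\sqrt{K}\,L^{2}\epsilon_{3}\,\delta_{K+1}(A)$ term. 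Adding the lower bound on $\max_{\Omega\setminus S}|w|$, the upper bound on $\max_{\Omega^{c}}|w|$, and rearranging reproduces Eq.~(\ref{eq: sc of SOMP-b-d}) as a sufficient condition, closing the induction.

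The principal obstacle is the $A^{T}A_{S}\hat{x}^{\mathrm{dev}}$ estimate: one must follow exactly how the per-signal deviations $d^{i}$ pass through the stacked pseudo-inverse $(A_{S})^{\dag}=(A_{S}^{T}A_{S})^{-1}A_{S}^{T}$ into the residuals and then into $w$, and verify that the interplay of the aggregation over the $L$ blocks, the $\frac{1}{\sqrt L}$ normalization hidden in $A$, and the $\ell_{2}\!\to\!\ell_{\infty}$ passage assembles into precisely the claimed powers of $L$ and $\sqrt{K}$ rather than a worse bound. A secondary, routine point is confirming that $S\subsetneq\Omega$ is maintained at every step, so that $\Omega\setminus S\neq\emptyset$, and that all slack constants coming out of Eq.~(\ref{eq: RIP}) remain uniformly bounded because $\delta_{K+1}(A),\delta_{K+1}^{max}<1$.
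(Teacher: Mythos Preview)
Your overall architecture---induct on $t$, prove correct support detection by comparing a lower bound on $\max_{j\in\Omega\setminus S}|w[j]|$ against an upper bound on $\max_{j\notin\Omega}|w[j]|$, and decompose via $x^{i}=x^{*}+d^{i}$ into a ``common'' part driven by $x^{*}$ and a ``deviation'' part governed by $\epsilon_{3}$---is exactly the paper's strategy. The identification $\sum_{i}(\Phi^{i})^{T}\Phi^{i}=L\,A^{T}A$, which lets the $x^{*}$ contribution collapse to a scalar multiple of the OMP statistic for the stacked system $(A,x^{*})$, is also what the paper exploits (e.g.\ their term $(III)=L\|A_{I}^{T}A_{T}x^{*}_{U}\|$).

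The one place your allocation of constants diverges from the paper is the \emph{lower} bound on the $\Omega$-side of $w$. You propose to run the Mo--Shen OMP analysis on $w^{\mathrm{com}}$ with matrix $A$, which would control the common lower bound by $\delta_{K+1}(A)$; the paper instead bounds $\bigl\|\sum_{j}[(\Phi^{j}_{T})^{T}\Phi^{j}_{T}-I]x^{*}_{U}\bigr\|$ termwise by the \emph{individual} constants $\delta_{K+1}(\Phi^{j})$, which is what produces the bare ``$1$'' in the $(1+L\epsilon_{3})\delta_{K+1}^{\max}$ block. Your parenthetical ``after merging with the $w^{\mathrm{com}}$ lower bound produces the $(1+L\epsilon_{3})\delta_{K+1}^{\max}$ block'' is therefore not quite right: with your split you would instead land on a condition of the shape $(\sqrt{K}+1)\delta_{K+1}(A)+(\text{deviation terms})<1$, not exactly Eq.~(\ref{eq: sc of SOMP-b-d}). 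Conversely, on the \emph{upper} bound the paper pushes even the deviation piece through $\delta_{K+1}(A)$ (their $(IV)\le L^{3}\epsilon_{3}\delta_{K+1}(A)\|x^{*}_{U}\|$), rather than through $\delta_{K+1}^{\max}$ as you suggest. So the strategy is the same; the exact theorem follows only if you mirror the paper's asymmetric choice---individual RICs on the lower bound, the stacked RIC on the upper bound---rather than treating $w^{\mathrm{com}}$ as a black-box OMP instance.
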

\begin{proof}
Please see Appendix for detailed proof.
\end{proof}

In the above three theorems, $\epsilon_{1}$, $\epsilon_{2}$, and $\epsilon_{3}$ describe the characteristics of involved signal ensembles, respectively.
First, among them, Theorem \ref{thm: SOMP-a-c} shows that when the entries of $x^{i}$'s have the same energy (unrelated to $sign(x^{i})$'s), we have $\epsilon_{1}=1$ and SOMP-(a+c) performs best.
On the other hand, the analysis is derived for SOMP-(a+c) without considering signal ensembles as follows.
\begin{Corr}
\label{prop: SOMP-a-c-another ver}
Let $\displaystyle \delta^{max}_{K}=\max_{i} \delta_{K}(\Phi^{i})$. Other assumptions follow Theorem \ref{thm: SOMP-a-c}. Then, the SOMP-(a+c) algorithm will perfectly reconstruct $x^{i}$'s with $i=1,\ldots,L$ if
$$\delta_{K+1}^{max} < \frac{1}{\sqrt{K} +2}.$$
\end{Corr}
\begin{proof}
Detailed proof is skipped due to limited space.
\end{proof}

In comparison with Theorem \ref{thm: SOMP-a-c}, the result of Corollary \ref{prop: SOMP-a-c-another ver} even is worse than SMV since $\delta_{K+1}^{max}$ is the maximum among $\delta_{K}(\Phi^{i})$'s and is increased with $L > 1$.
However, when $\epsilon_{1}=1$, $\delta_{K+1}^{max} < \frac{1}{\sqrt{K} +2}$ is one of solutions to satisfy (\ref{eq: sc of SOMP-a-c}) in Theorem \ref{thm: SOMP-a-c}.
In fact, Theorem \ref{thm: SOMP-a-c} requires that the mean of $\delta_{K+1}(\Phi^{i})$'s instead of $\delta_{K+1}^{max}$ is small.

Second, as shown in Theorem \ref{thm: SOMP-b-c}, $\epsilon_{2}$ indicates that $x^{i}$'s should be distributed around the center $x^{*}$, which should be far away from the origin.
In other words, $x^{i}$'s have the same sign to maximize the denominator of $\epsilon_{2}$.
To fairly compare Theorem \ref{thm: SOMP-a-c} and Theorem \ref{thm: SOMP-b-c}, we need to build the relationship between $\delta_{K}(A)$ and $\delta_{K}(\Phi^{i})$.
In fact, $\sqrt{L}\delta_{K}(A) \sim  \delta_{K+1}^{max}$.
In addition, a random matrix is known to satisfy $\delta_{cK}< \theta$ with high probability provided one chooses $M = O(\frac{cK}{\theta^{2}} \log \frac{N}{K})$ \cite{Jain2011}.
Then, it is trivial to check that when $\epsilon_{1}=1$, $\epsilon_{2}=0$ (the best case for both theorems), and $L=K$, the number of measurements required in Theorem \ref{thm: SOMP-a-c} is about $O(K)$ larger than that in Theorem \ref{thm: SOMP-b-c}.

Finally, we note that the desired signal ensembles for both the cases of $\epsilon_2$ and $\epsilon_{3}$ are the same.
Since the numerator in $\epsilon_{3}$ is fixed, it implies that $\epsilon_{2} \leq \epsilon_{3}$.
However, it should be noted that, when $\epsilon_{2}=\epsilon_{3}=0$, the sufficient condition in Theorem \ref{thm: SOMP-b-d}, compared with that in Theorem \ref{thm: SOMP-b-c}, is slightly relaxed.
In addition, the assumption in Theorem \ref{thm: SOMP-b-d} only requires that $A$, instead of all $\Phi^{i}$'s, satisfies RIP and that $K\leq M$.
Thus, even though individual $\Phi^{i}$ does not satisfy RIP, perfect reconstruction is still possible.
The following corollary shows that $K\leq M$ can be further removed for perfect support detection.
\begin{Corr}
\label{prop: SOMP-b-d-another ver}
Suppose $A$ satisfies RIP.
Then. the SOMP-(b+d) algorithm will perfectly detect the support set of $x^{i}$'s for $i=1,\ldots,L$ with the same sufficient condition in Theorem \ref{thm: SOMP-b-d}.
\end{Corr}
\begin{proof}
Detailed proof is skipped due to limited space.
\end{proof}


\section{Experiments}\label{sec:our exp}
In this section, we validate our three theorems from empirical simulations.
We first generate four types of signal ensembles as follows:
 \begin{itemize}
\item[I.] $x^{i} \sim \mathcal{N}(0,I)$ with $i=1,\ldots,L$.
\item[II.] $x^{i} \sim abs(\mathcal{N}(0,I))$ with $i=1,\ldots,L$.
\item[III.] $x^{i} \sim \mathcal{N}(I,0.25I)$ with $i=1,\ldots,L$.
\item[IV.] $x^{i} = 1$ with $i=1,\ldots,L$.
\end{itemize}
Then, we repeat the following verification procedure $100$ times for each set of parameters, composed of $M$, $K$, and $L$, under $N=100$.
\begin{enumerate}
  \item Construct $x^{i}$'s according to one of the above four types.
  \item Draw $L$ standard normal matrices $\Phi^{i} \in \mathbb{R}^{M \times N}$ for $i=1,\ldots,L$ to sample signals based on Eq. (\ref{Eq: DCS}).
  \item Run SOMP-(a+c), SOMP-(b+c), and SOMP-(b+d), respectively, to obtain an optimal point $\bar{x}^{i}$'s.
  \item Declare success if $ \sum_{i=1}^{L}\|\bar{x}^{i} - x^{i}\| \leq 10^{-5}$.
\end{enumerate}
So, the successful probability is defined as the number of successes divided by $100$.

These types of signals are designed in order to present different values of $\epsilon_{1}$, $\epsilon_{2}$ and $\epsilon_{3}$.
For example, $\epsilon_{2}$ and $\epsilon_{3}$ are gradually decreased from Type I to Type IV.
In addition, $\epsilon_{1}$'s in Types I and II are the same but are smaller than those in Types III and IV.

The results for Types I, II, III, and IV are shown in Figs. \ref{fig:Performance with L3}(a)-(d) with $L=3$, respectively, where the curve denotes the phase transition of the probability of success achieving $50\%$.
It should be noted that SOMP-(b+d)-supp only considers the success of ``support detection'' instead of signal reconstruction in SOMP-(b+d).
Thus, according to Corollary \ref{prop: SOMP-b-d-another ver}, success may happen even when $K>M$.

It is also observed from Figs. \ref{fig:Performance with L3}(a)-(c) that the curve of  SOMP-(b+d) overlaps with that of SOMP-(b+d)-supp.
This is because correct support detection implies perfect reconstruction for $K \leq M$.
In addition, it is surprising to see from Figs. \ref{fig:Performance with L3}(c)-(d) that SOMP-(b+d)-supp exhibits higher probability of success when $K$ approaches $N$.
This may be due to the fact that since the number $\binom{N}{K}$ of candidate support sets approaches $1$.
For example, when $K=N$, support detection always is correct with $\binom{N}{K}=1$.
Fig. \ref{fig:Performance with L9} reaches the same conclusions with Fig. \ref{fig:Performance with L3} but exhibits higher successful probability under $L=9$.

In summary, SOMP-(a+c) has the weakest assumption about signal ensembles such that it can be applied to all different types of signal ensembles.
Even so, for Types II-IV, its performance is not the best among the methods used for comparisons.
In fact, when $\epsilon_{1}$, $\epsilon_{2}$, and $\epsilon_{3}$ are the best choices such that the sufficient conditions are easy to satisfy in Theorems \ref{thm: SOMP-a-c}$\sim$\ref{thm: SOMP-b-d}, respectively, the sufficient condition for Theorem \ref{thm: SOMP-a-c} is relatively not easy to satisfy.

In contrast, SOMP-(b+c) outperforms SOMP-(a+c) remarkably when signals have the same sign, as shown in from Figs. \ref{fig:Performance with L3}(b)-(d).
Compared with SOMP-(b+c), the assumption in SOMP-(b+d) is more sensitive to Euclidean distances between signals, implying  large $\epsilon_{3}$, such that its performance is worse than SOMP-(b+c).
However,  in terms of support detection, SOMP-(b+d) has potential to lower the number of measurements when $M < K$.
In addition, when $\epsilon_{2}=\epsilon_{3}=0$, SOMP-(b+d) outperforms SOMP-(b+c).

\begin{figure}[t]
\begin{minipage}[b]{.49\linewidth}
  \centering{\epsfig{figure=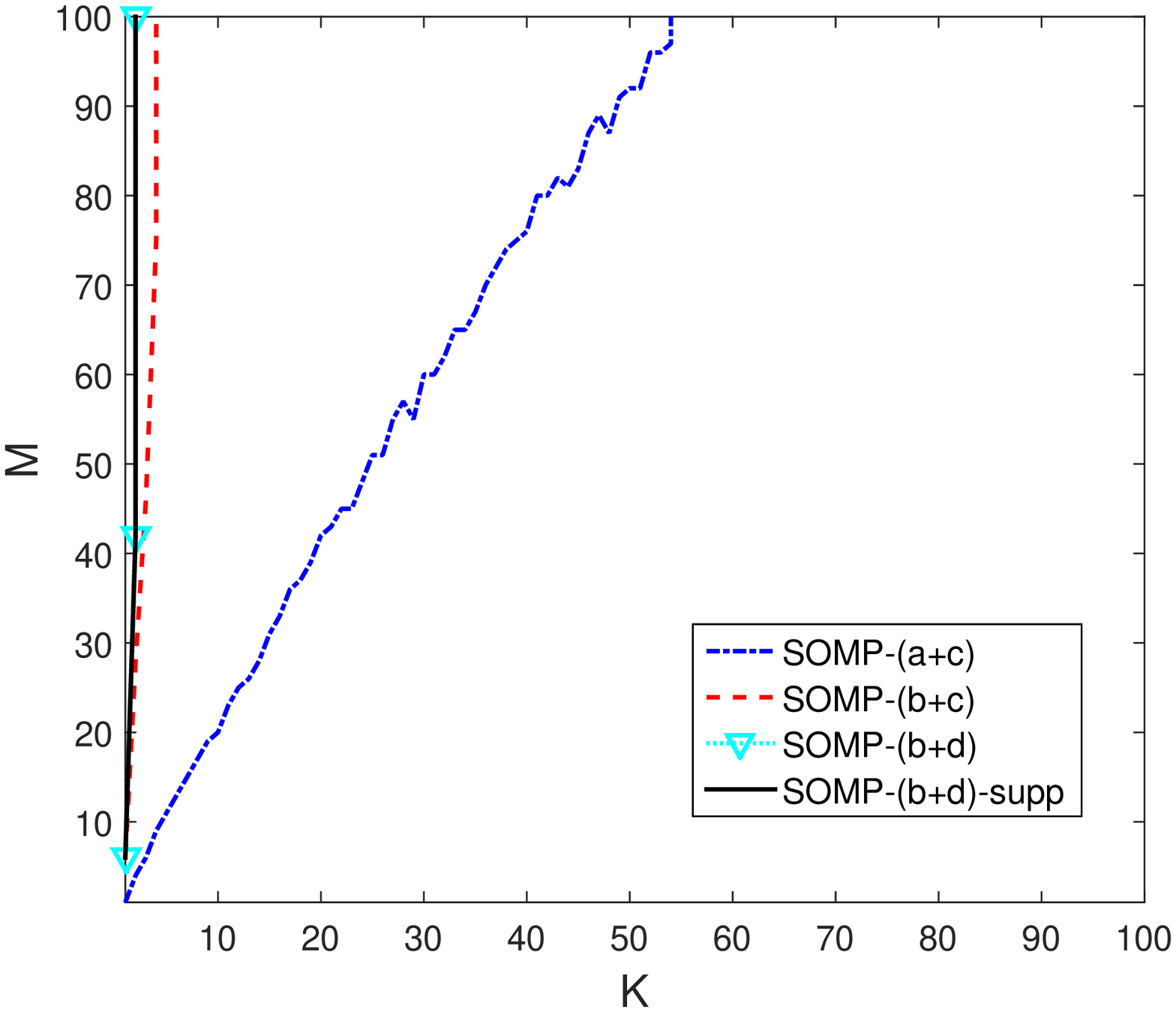,width=1.8in}}
  \centerline{(a)}
\end{minipage}
\begin{minipage}[b]{.5\linewidth}
  \centering{\epsfig{figure=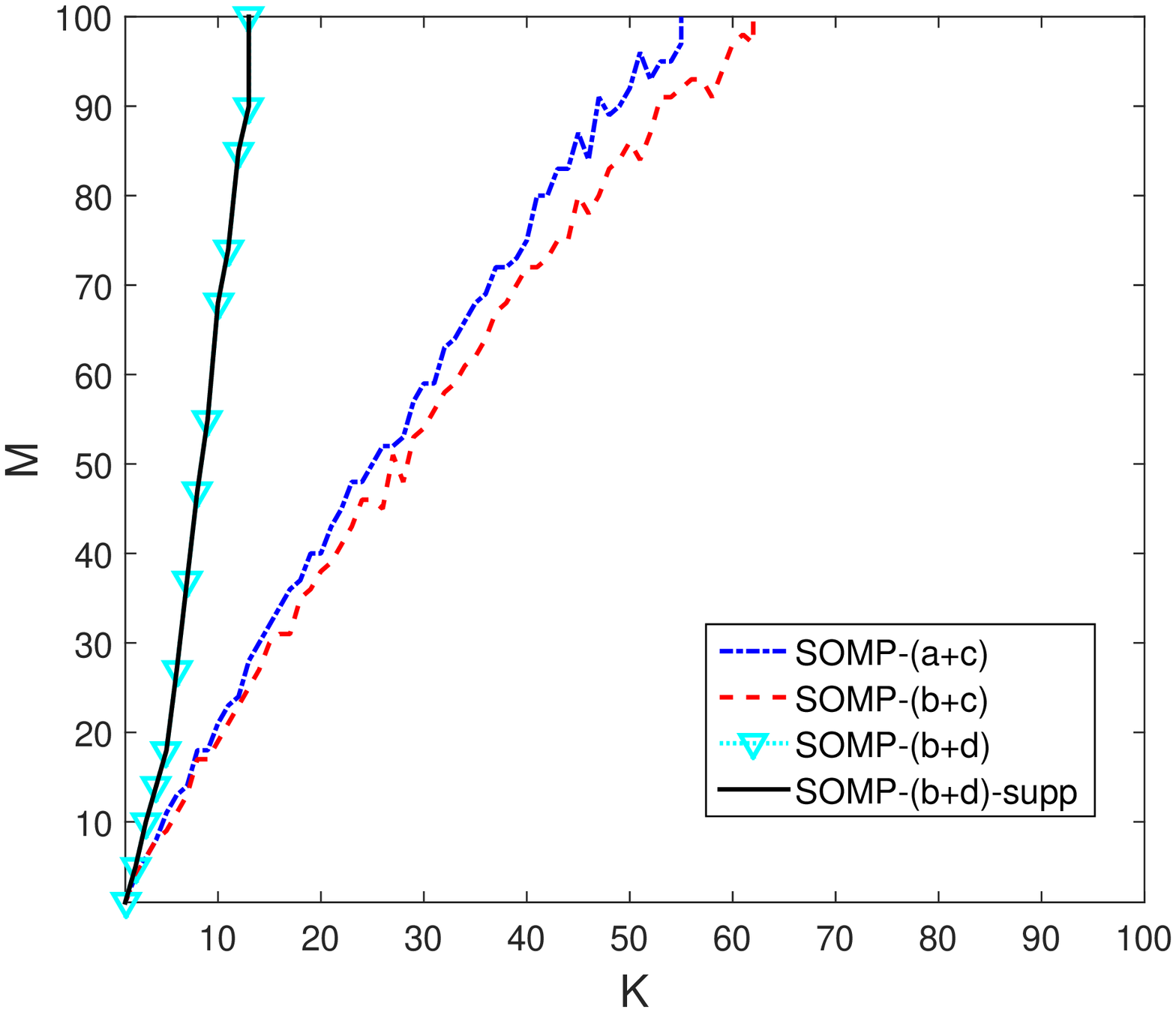,width=1.8in}}
  \centerline{(b)}
\end{minipage}
\begin{minipage}[b]{.49\linewidth}
  \centering{\epsfig{figure=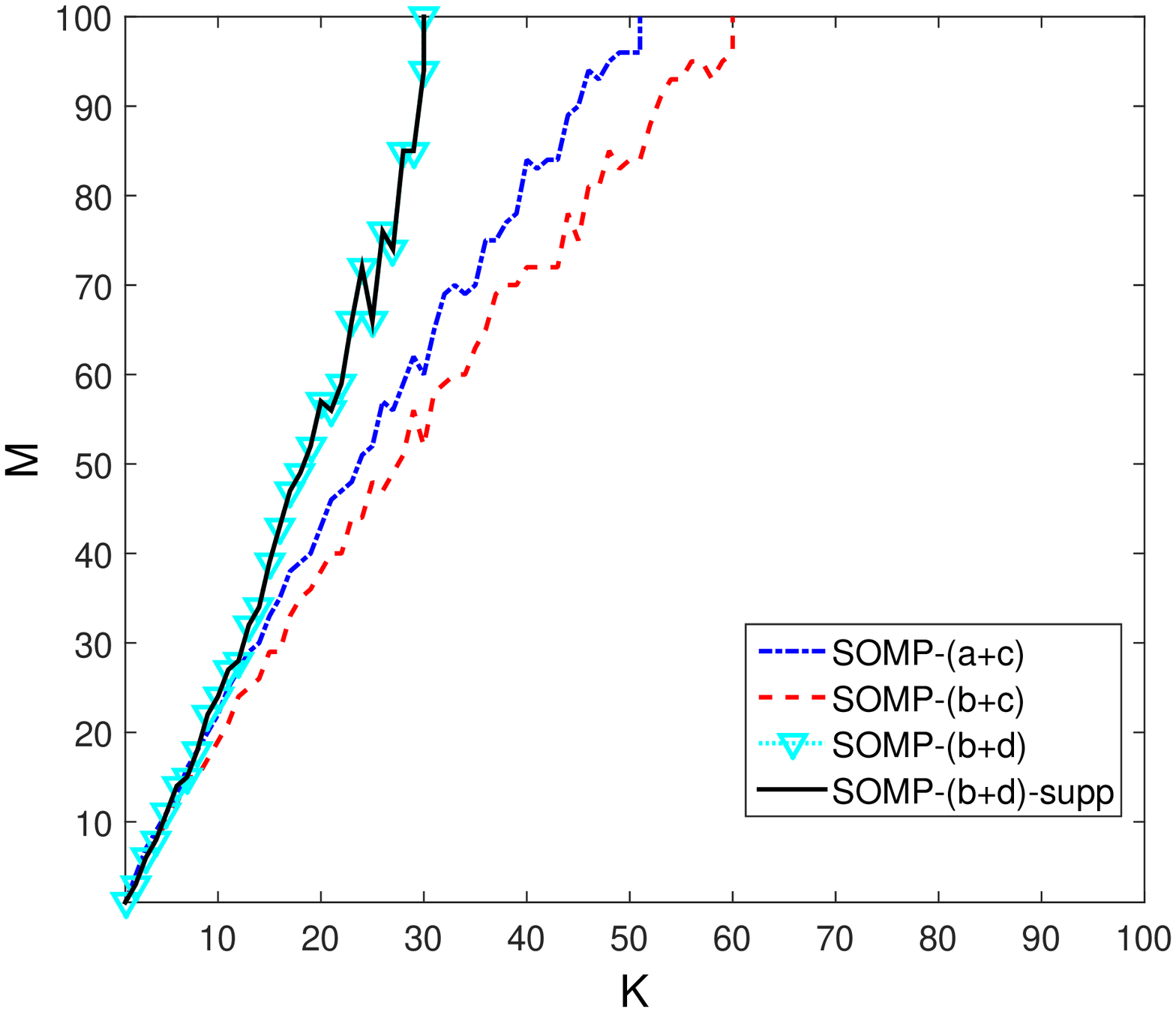,width=1.8in}}
  \centerline{(c)}
\end{minipage}
\begin{minipage}[b]{.5\linewidth}
  \centering{\epsfig{figure=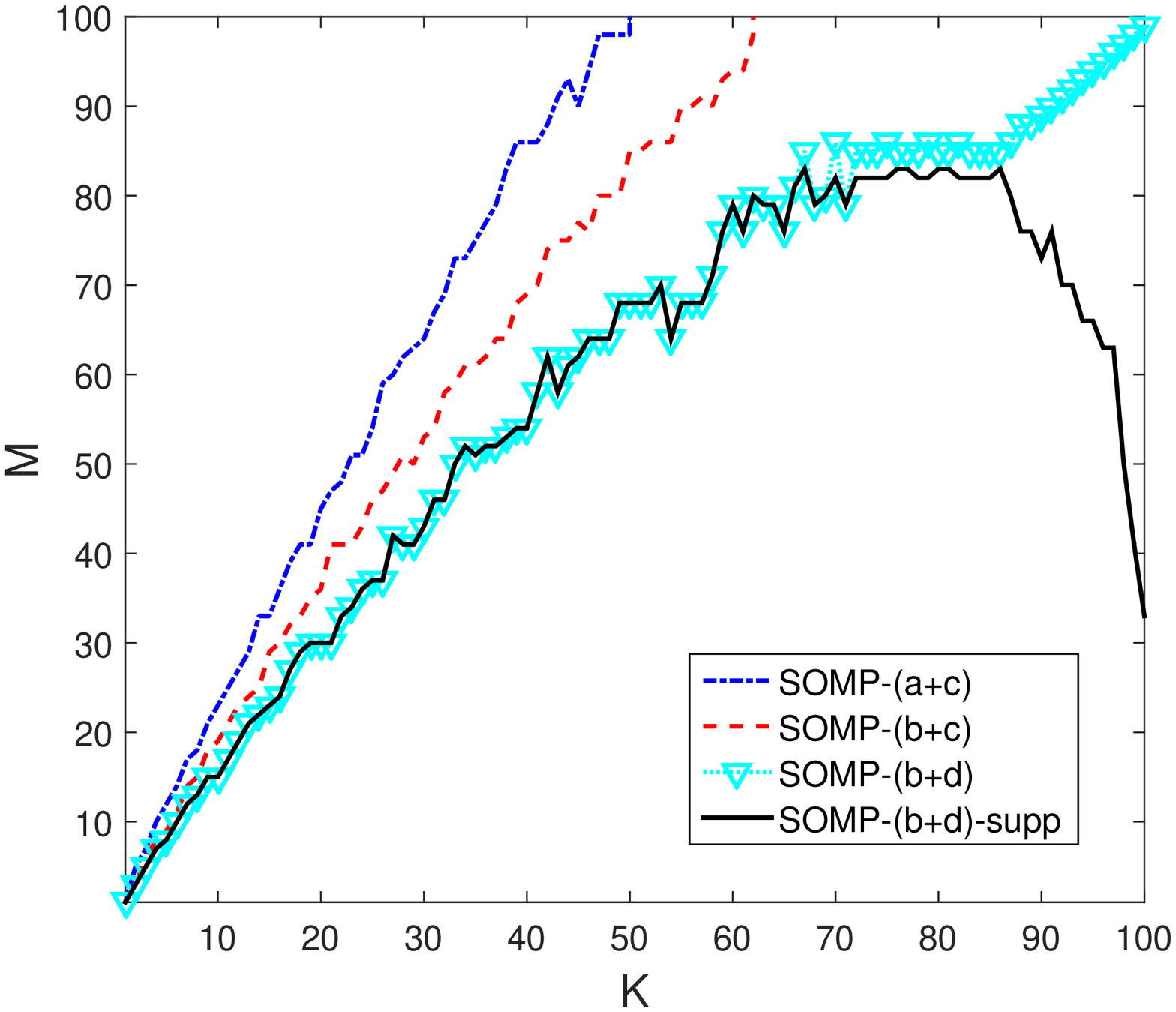,width=1.8in}}
  \centerline{(d)}
\end{minipage}
\caption{Performance analysis for different types of signals: (a) Type I; (b) Type II; (c) Type III; (d) Type IV, under $L=3$ and $N=100$.
The curve denotes the phase transition of probability of success achieving $50\%$.
The region above the curve means the probability $\geq 50\%$.}
\label{fig:Performance with L3}
\end{figure}

\begin{figure}[t]
\begin{minipage}[b]{.49\linewidth}
  \centering{\epsfig{figure=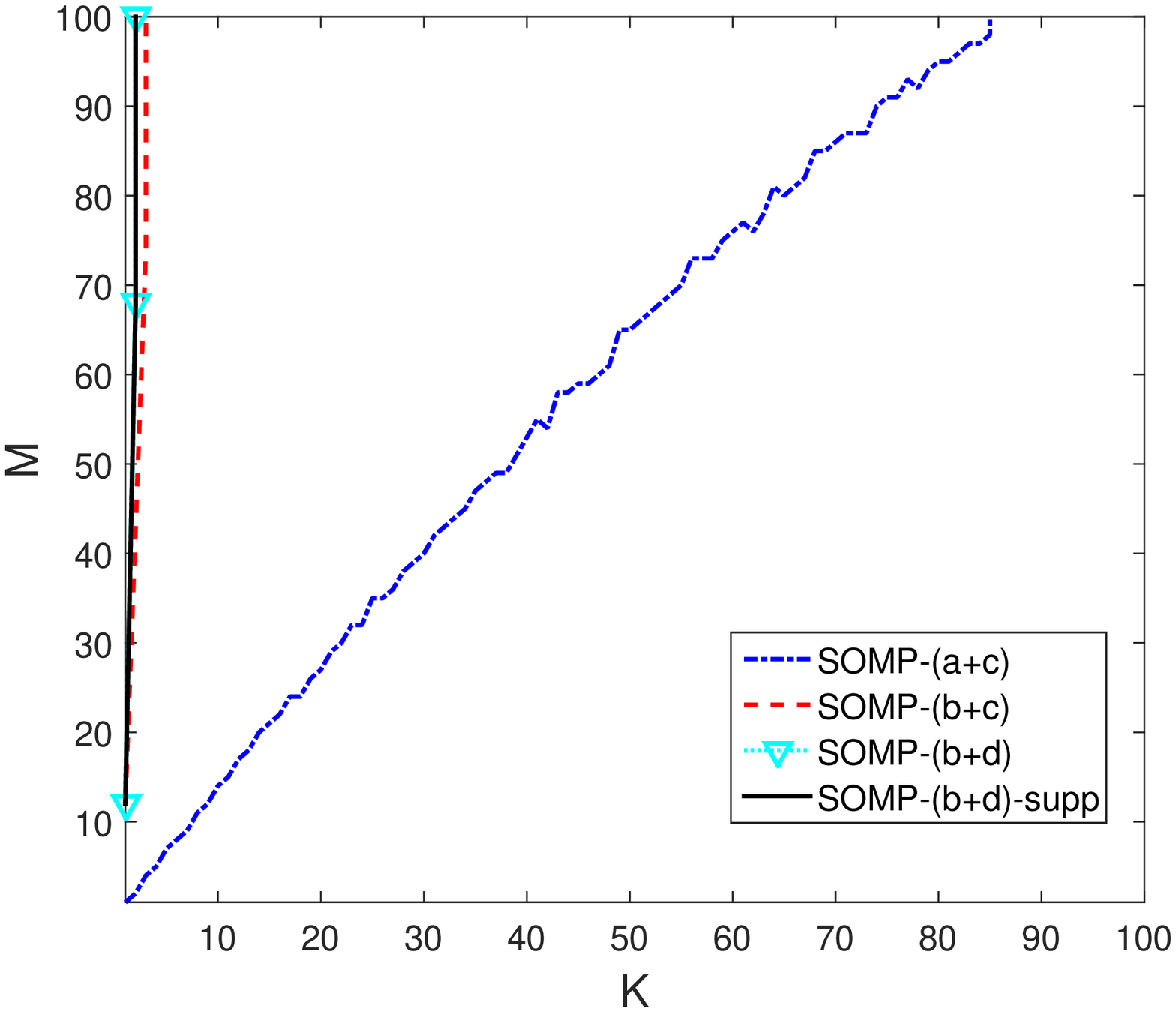,width=1.8in}}
  \centerline{(a)}
\end{minipage}
\begin{minipage}[b]{.5\linewidth}
  \centering{\epsfig{figure=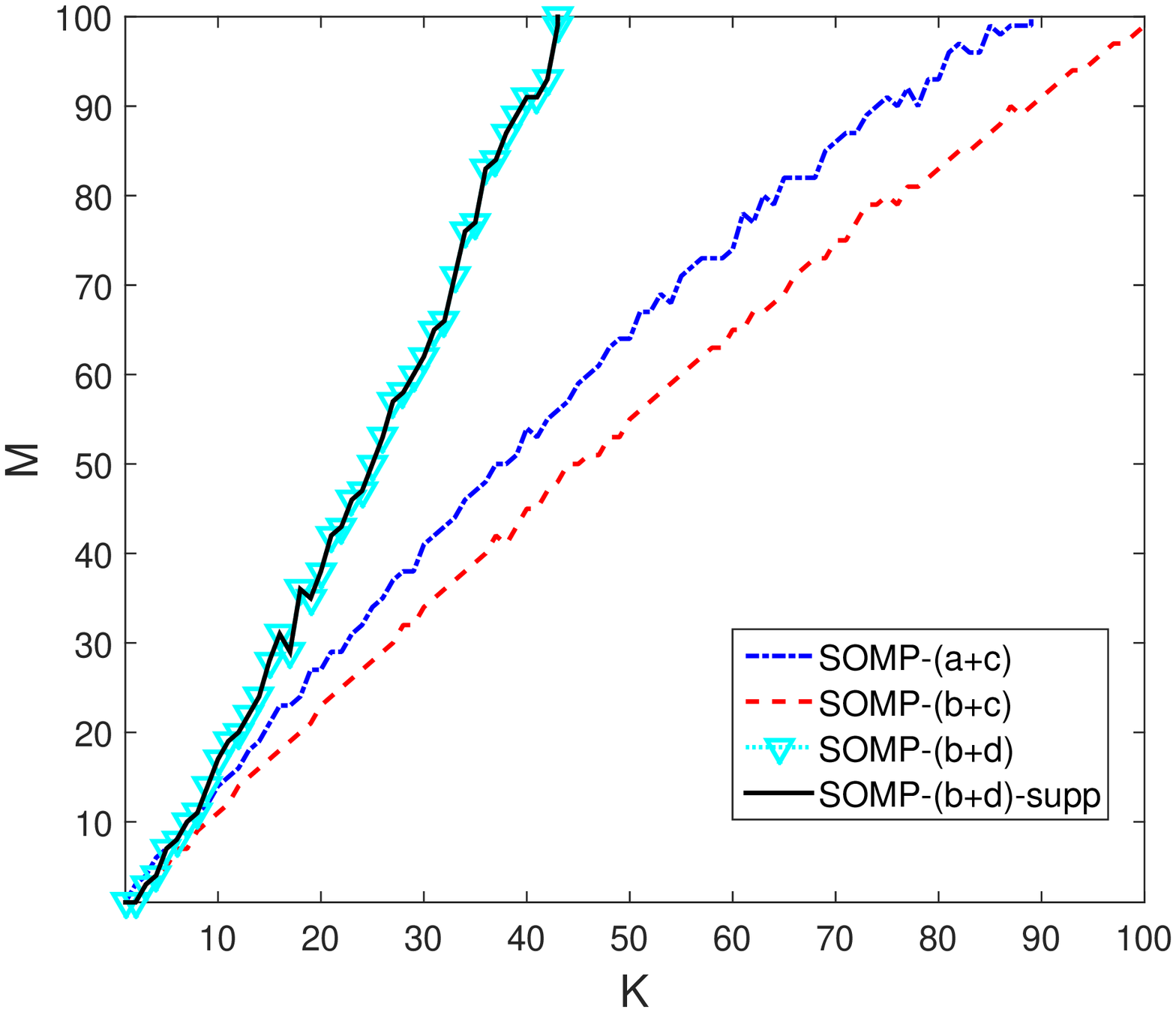,width=1.8in}}
  \centerline{(b)}
\end{minipage}
\begin{minipage}[b]{.49\linewidth}
  \centering{\epsfig{figure=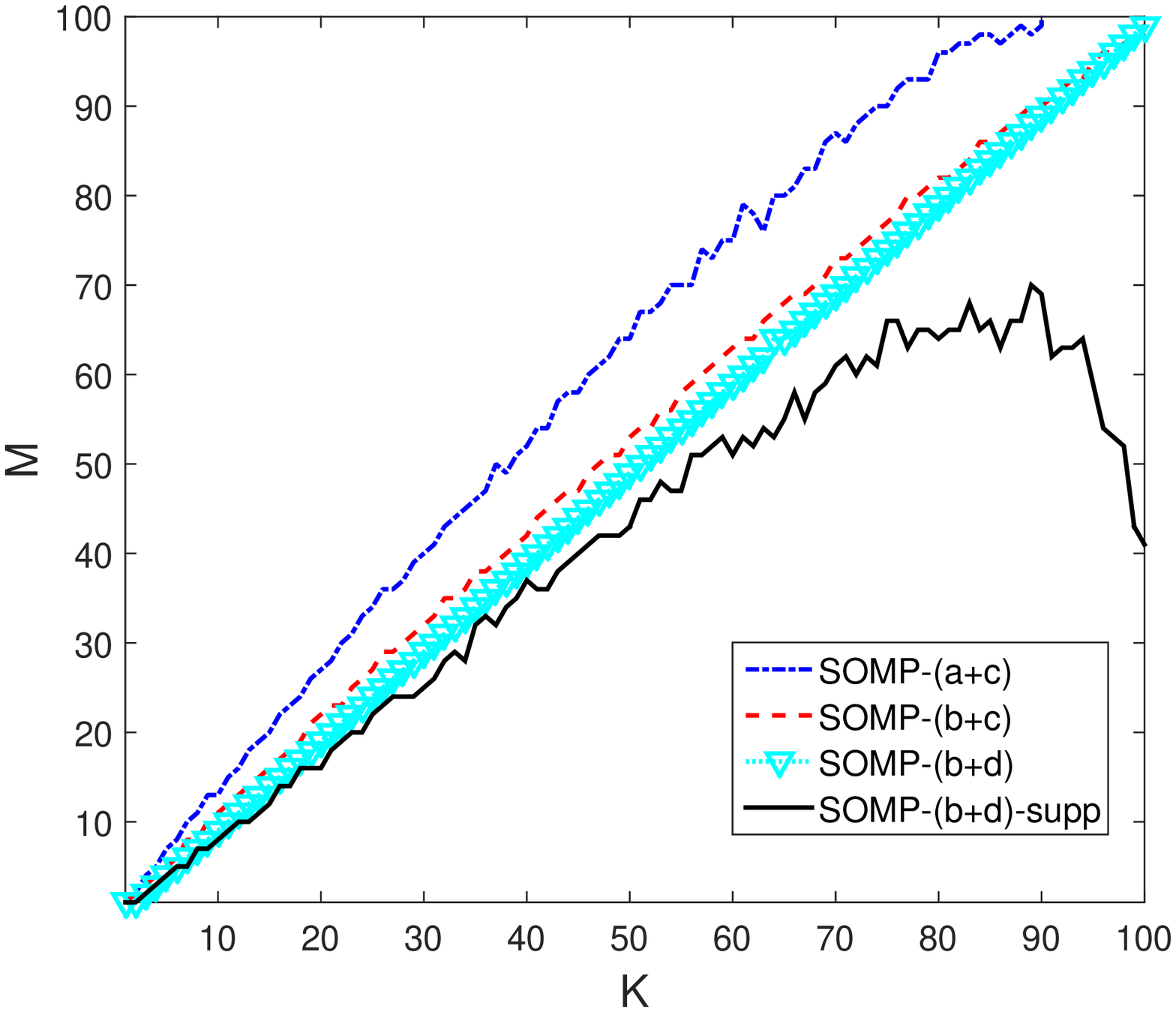,width=1.8in}}
  \centerline{(c)}
\end{minipage}
\begin{minipage}[b]{.5\linewidth}
  \centering{\epsfig{figure=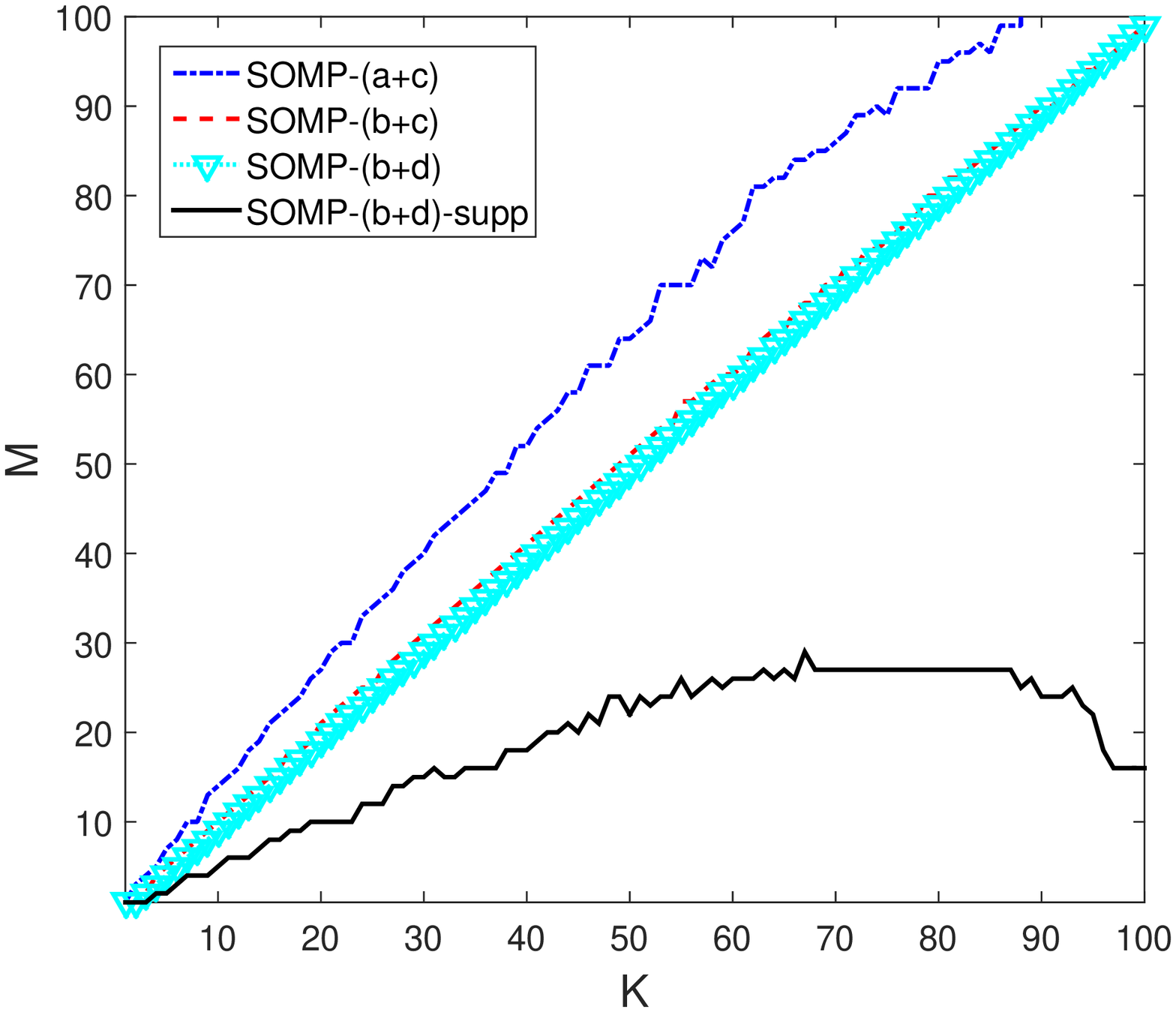,width=1.8in}}
  \centerline{(d)}
\end{minipage}
\caption{Performance analysis for different types of signals: (a) Type I; (b) Type II; (c) Type III; (d) Type IV, under $L=9$ and $N=100$.}
\label{fig:Performance with L9}
\end{figure}

\section{Appendix}\label{sec:append}
\noindent \textbf{Proof of Theorem \ref{thm: SOMP-a-c}:}
To prove Theorem \ref{thm: SOMP-a-c}, we need the following lemmas.
\begin{Lem}
Let $C = [A \mid B] \in \mathbb{R}^{m \times (k_1+k_2)}$ with $A \in \mathbb{R}^{m \times k_1}$, and $B \in \mathbb{R}^{m \times k_2}$.
Then
$$\sigma_{\min}(A^T(I-BB^{\dagger})A) \geq \sigma_{\min}(C^TC).$$
\end{Lem}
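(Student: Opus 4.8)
The plan is to reduce both sides to quantities of the form $\min_{\|\cdot\|=1}\|\cdot\|^{2}$ and then, for the optimal unit vector appearing on the left, to exhibit a companion vector for the right-hand quantity that has the same image under $C$ but norm at least $1$; renormalizing this companion can only shrink the norm of its image, which yields the inequality.

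First I would set $P = I - BB^{\dagger}$ and recall the standard facts about the Moore--Penrose pseudoinverse: $BB^{\dagger}$ is the orthogonal projector onto $\mathrm{range}(B)$, so $P$ is symmetric and idempotent, $P = P^{T} = P^{2}$. Consequently $A^{T}(I-BB^{\dagger})A = A^{T}P^{T}PA = (PA)^{T}(PA)$, which is symmetric positive semidefinite; hence its smallest singular value equals its smallest eigenvalue, $\sigma_{\min}\big(A^{T}(I-BB^{\dagger})A\big) = \lambda_{\min}\big((PA)^{T}(PA)\big) = \min_{\|u\|=1}\|PAu\|^{2}$, where $u$ ranges over $\mathbb{R}^{k_1}$. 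In exactly the same way $\sigma_{\min}(C^{T}C) = \min_{\|z\|=1}\|Cz\|^{2}$, where $z$ ranges over $\mathbb{R}^{k_1+k_2}$.

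Next, fix any $u \in \mathbb{R}^{k_1}$ with $\|u\|=1$, put $v = -B^{\dagger}Au \in \mathbb{R}^{k_2}$, and form the block vector $z = [\,u\,;\,v\,]$. Then $Cz = Au + Bv = Au - BB^{\dagger}Au = PAu$, while $\|z\|^{2} = \|u\|^{2} + \|v\|^{2} = 1 + \|v\|^{2} \ge 1$. Therefore
$$\min_{\|w\|=1}\|Cw\| \;\le\; \Big\|C\tfrac{z}{\|z\|}\Big\| \;=\; \frac{\|Cz\|}{\|z\|} \;\le\; \|Cz\| \;=\; \|PAu\|.$$
Taking the infimum of the right-hand side over all unit $u$ and then squaring gives $\sigma_{\min}(C^{T}C) = \min_{\|w\|=1}\|Cw\|^{2} \le \min_{\|u\|=1}\|PAu\|^{2} = \sigma_{\min}\big(A^{T}(I-BB^{\dagger})A\big)$, which is the claim.

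The only point requiring care is the bookkeeping behind the identity $Cz = PAu$: it uses that the projection $BB^{\dagger}Au$ lies in $\mathrm{range}(B)$ and is written explicitly as $B(B^{\dagger}Au)$, so that appending the coordinates $v = -B^{\dagger}Au$ to $u$ realizes the projection residual $PAu$ precisely as $Au + Bv$. Everything else is the elementary observation that padding a unit vector with extra coordinates can only enlarge its norm, so renormalizing $z$ cannot increase $\|Cz\|$. I do not expect any genuine obstacle here; the statement is essentially a reformulation of the variational characterization of the least squares residual.
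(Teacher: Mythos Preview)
Your proof is correct and follows essentially the same strategy as the paper: for a unit vector $u$ on the left-hand side, build a companion $z=[u;v]$ with $Cz=(I-BB^{\dagger})Au$ and $\|z\|\ge 1$, then invoke the variational characterization of $\sigma_{\min}$. The only cosmetic difference is that the paper first passes through the SVD $B=USV^{T}$ to write $I-BB^{\dagger}=U_{I^{C}}U_{I^{C}}^{T}$ and argues existence of a suitable $v$ via $\mathrm{span}(B)=\mathrm{span}(U_{I})$, whereas you bypass the SVD and give $v=-B^{\dagger}Au$ explicitly; your route is slightly more direct but not a different idea.
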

\begin{proof}
Let $B=USV^T$ with $U$ being written as $[U_I \mid U_{I^C}]$, where $I$ denotes the set of indices corresponding to non-zero singular values and $I^C$ is complement of $I$.
Since
$$
A^T(I-BB^{\dagger})A=A^T(U_{I^C}U_{I^C}^T)A=(U_{I^C}^TA)^T(U_{I^C}^TA),
$$
it remains to show that $\sigma_{\min}(U_{I^C}^TA) \geq \sigma_{\min}(C).$\\
Let $x$ be a nonzero singular vector with respect to $\sigma_{\min}(U_{I^C}^TA)$, we have
\begin{equation}
\left\| U_{I^C}^TAx \right\| = \sigma_{\min}(U_{I^C}^TA) \left\| x \right\|.
\label{lemeq1}
\end{equation}
Note that $\left\| U_{I^C}U_{I^C}^TAx \right\| = \left\| U_{I^C}^TAx \right\| $ due to the column orthogonality of $U_{I^C}$. Thus,
\begin{equation}
\left\| U_{I^C}^TAx \right\| = \left\| U_{I^C}U_{I^C}^TAx \right\| = \left\| (I-U_IU_I^T)Ax \right\|.
\label{lemeq2}
\end{equation}
Now, we can choose $z \in \mathbb{R}^{k_2}\backslash \{0\}$ such that $Bz = U_IU_I^TAx$ by using the fact that span($B$) = span($U_I$).
Let $v = \begin{bmatrix} x\\ -z  \end{bmatrix}$, we have
$$
\begin{aligned}
\left\| Cv \right\| &= \left\|  Ax-Bz \right\| = \left\|  Ax-U_IU_I^TAx \right\| \\
&\stackrel{(\ref{lemeq2})}{=}  \left\|  U_{I^C}^TAx \right\| \stackrel{(\ref{lemeq1})}{=} \sigma_{\min}(U_{I^C}^TA) \left\| x \right\|.
\end{aligned}
$$
On the other hand, we have $\left\| Cv \right\| \geq \sigma_{\min}(C)\left\| v \right\|$.
Therefore, $$\sigma_{\min}(C) \leq \sigma_{\min}(U_{I^C}^TA)\frac{\|x\|}{\|v\|} \leq \sigma_{\min}(U_{I^C}^TA),$$
and we completed the proof.
\end{proof}

\begin{Lem}(\textbf{Monotonicity of the RIC})\cite{Candes2005}\\
If the sensing matrix $\Phi$ satisfies the RIP of both orders $K_1$ and $K_2$, then $\delta_{K_1}(\Phi) \leq \delta_{K_2}(\Phi)$ for any $K_1 \leq K_2$.
\end{Lem}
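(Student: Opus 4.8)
The plan is to prove the monotonicity directly from the defining property of the restricted isometry constant, namely that $\delta_K(\Phi)$ is the \emph{smallest} constant in $(0,1)$ for which the RIP inequality holds simultaneously over \emph{all} $K$-sparse vectors; equivalently, $\delta_K(\Phi) = \sup_{x}\bigl|\,\|\Phi x\|_2^2/\|x\|_2^2 - 1\,\bigr|$, where the supremum ranges over all nonzero $x$ with $|supp(x)| \le K$.

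First I would record the elementary set inclusion underlying the claim: since $K_1 \le K_2$, every vector $x$ with $|supp(x)| \le K_1$ also satisfies $|supp(x)| \le K_2$, so the collection of $K_1$-sparse test vectors is contained in the collection of $K_2$-sparse test vectors. Consequently, any constant $\delta$ that makes $(1-\delta)\|x\|_2^2 \le \|\Phi x\|_2^2 \le (1+\delta)\|x\|_2^2$ hold for every $K_2$-sparse $x$ \emph{a fortiori} makes it hold for every $K_1$-sparse $x$; thus every admissible RIC for order $K_2$ is also an admissible RIC for order $K_1$. Taking infima over the two nested sets of admissible constants — equivalently, taking suprema of the distortion ratio over the two nested sets of test vectors — immediately yields $\delta_{K_1}(\Phi) \le \delta_{K_2}(\Phi)$.

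There is no real obstacle here; the only subtlety worth flagging is that the statement is meaningful precisely because $\delta_{K_1}$ and $\delta_{K_2}$ are understood to be the \emph{extremal} (smallest) admissible constants rather than arbitrary ones — which is why the hypothesis asks that $\Phi$ satisfy RIP of \emph{both} orders, guaranteeing both infima lie in $(0,1)$ so the comparison is well posed. If one prefers the two-sided-norm form of Eq.~(\ref{eq: RIP}) over the squared-norm form, the same inclusion argument applies verbatim, since the set of $K_1$-sparse vectors is still contained in the set of $K_2$-sparse vectors and nothing else in the argument changes.
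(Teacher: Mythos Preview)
Your argument is correct and is exactly the standard proof: the set of $K_1$-sparse vectors is contained in the set of $K_2$-sparse vectors, so the supremum (or infimum over admissible constants) defining $\delta_{K_1}(\Phi)$ is taken over a smaller set than that defining $\delta_{K_2}(\Phi)$, yielding $\delta_{K_1}(\Phi)\le\delta_{K_2}(\Phi)$. The paper does not supply its own proof of this lemma --- it is quoted as a known fact from \cite{Candes2005} --- so there is nothing to compare against; your write-up is complete as is.
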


\begin{Lem}\cite{Candes2008}
\label{lemma: disjoint}
Let $I_1$, $I_2 \subset \Omega$ be two disjoint sets ($I_1 \cap I_2 =\emptyset$). If $\delta_{|I_1|+|I_2|} < 1,$ then
$$\|(\Phi_{I_1})^{T}\Phi_{I_2}x\|_2 \leq \delta_{|I_1|+|I_2|}(\Phi)\|x\|_2$$
holds for any $x$.
\end{Lem}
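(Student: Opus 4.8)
\noindent\textbf{Proof proposal for Lemma~\ref{lemma: disjoint}.}
The plan is to reduce the claim to a bound on the spectral norm of the cross-Gram matrix $(\Phi_{I_1})^{T}\Phi_{I_2}$, and then to obtain that bound from the RIP of $\Phi$ at order $|I_1|+|I_2|$ by a polarization argument. Since $\|(\Phi_{I_1})^{T}\Phi_{I_2}x\|_2 \le \|(\Phi_{I_1})^{T}\Phi_{I_2}\|_{2\to 2}\,\|x\|_2$ for every $x$, it suffices to prove $\|(\Phi_{I_1})^{T}\Phi_{I_2}\|_{2\to 2} \le \delta_{|I_1|+|I_2|}(\Phi)$, equivalently that $|\langle \Phi_{I_1}a,\,\Phi_{I_2}b\rangle| \le \delta_{|I_1|+|I_2|}(\Phi)$ for all unit vectors $a$ and $b$ of the appropriate dimensions. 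Before starting I would record the quadratic form of the RIP, namely $(1-\delta_s(\Phi))\|z\|_2^2 \le \|\Phi z\|_2^2 \le (1+\delta_s(\Phi))\|z\|_2^2$ for every $s$-sparse $z$, which is equivalent to the Gram-form bound in Eq.~(\ref{eq: RIP}) since $\Phi_I^{T}\Phi_I$ is symmetric positive semidefinite.

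Next I would identify $a$ and $b$ with $N$-dimensional vectors $\tilde a$ and $\tilde b$ supported on $I_1$ and $I_2$, so that $\Phi_{I_1}a=\Phi\tilde a$ and $\Phi_{I_2}b=\Phi\tilde b$. Because $I_1\cap I_2=\emptyset$, the vectors $\tilde a$ and $\tilde b$ have disjoint supports, hence $\tilde a\perp\tilde b$, and $\tilde a\pm\tilde b$ is supported on $I_1\cup I_2$, a set of size $|I_1|+|I_2|$, with $\|\tilde a\pm\tilde b\|_2^2=\|\tilde a\|_2^2+\|\tilde b\|_2^2=2$. The polarization identity
$$\langle \Phi\tilde a,\Phi\tilde b\rangle=\tfrac14\Big(\|\Phi(\tilde a+\tilde b)\|_2^2-\|\Phi(\tilde a-\tilde b)\|_2^2\Big),$$
combined with the quadratic RIP bound at order $|I_1|+|I_2|$ — the upper bound applied to $\|\Phi(\tilde a+\tilde b)\|_2^2$ and the lower bound applied to $\|\Phi(\tilde a-\tilde b)\|_2^2$ — gives $\langle \Phi\tilde a,\Phi\tilde b\rangle\le\tfrac14\big(2(1+\delta)-2(1-\delta)\big)=\delta$ with $\delta=\delta_{|I_1|+|I_2|}(\Phi)$. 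Replacing $a$ by $-a$ yields the matching lower bound $\langle\Phi\tilde a,\Phi\tilde b\rangle\ge-\delta$, so $|\langle\Phi_{I_1}a,\Phi_{I_2}b\rangle|\le\delta$; taking the supremum over unit $a$ and $b$ gives the operator-norm bound, and the lemma follows.

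I do not expect a genuine obstacle; the argument is short once it is set up. The points deserving care are that disjointness of $I_1$ and $I_2$ is exactly what makes $\tilde a\perp\tilde b$ (so the cross term in $\|\tilde a\pm\tilde b\|_2^2$ vanishes) and what forces $\tilde a\pm\tilde b$ to be $(|I_1|+|I_2|)$-sparse, so that $\delta_{|I_1|+|I_2|}$ is the correct order to invoke; and if $\Phi$ is assumed to satisfy RIP only at some order larger than $|I_1|+|I_2|$, one first applies monotonicity of the restricted isometry constants (stated above) to descend to $\delta_{|I_1|+|I_2|}(\Phi)$.
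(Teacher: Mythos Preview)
Your argument is correct: the polarization identity together with the quadratic form of the RIP on the $(|I_1|+|I_2|)$-sparse vectors $\tilde a\pm\tilde b$ yields $|\langle\Phi_{I_1}a,\Phi_{I_2}b\rangle|\le\delta_{|I_1|+|I_2|}(\Phi)$ for all unit $a,b$, and hence the operator-norm bound and the lemma. The paper itself does not supply a proof of this lemma but simply quotes it from \cite{Candes2008}, so there is no in-paper argument to compare against; your proof is in fact the standard one found in that reference.
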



Now, we are ready to prove Theorem \ref{thm: SOMP-a-c}.
Here, $I_{1}$ and $I_{2}$ in Lemma \ref{lemma: disjoint} denote the chosen index at $t$-th iteration and ground truth $\Omega$, respectively.
Thus, by the contrapositive of statement in Lemma \ref{lemma: disjoint}, it implies that the chosen index and $\Omega$ are not disjoint; {\em i.e.}, support detection is correct.


\begin{proof}
Let $\Phi^j_{\Omega} = [\Phi^j_S \mid \Phi^j_U]$, where $S$ denotes the support set that has been solved and $U$ denotes the support set that has not solved yet. Let $r^{j,t}$ be the residual and let $I$ be the chosen index at $t$-th iteration. For simplicity, let $r^j$ and $\delta^j$ denote $r^{j,t}$ and $\delta(\Phi^j)$, respectively.
When $I \notin \Omega$, we first derive the upper bound of $ \sum_j \left\| (\Phi^j_I)^Tr^{j} \right\|$ by Lemma \ref{lemma: disjoint} as:
$$
\begin{aligned}
\label{eq: the upper bound of SOMP-a+c}
\displaystyle \sum_j & \left\| (\Phi^j_I)^Tr^j\right\| = \sum_j\left\| (\Phi^j_I)^T(I-\Phi^j_S(\Phi^j_S)^{\dagger})\Phi^j_Ux^j_U\right\| \\
 &\leq  \displaystyle \sum_j \left( \left\| (\Phi^j_I)^T\Phi^j_Ux^j_U \right\| + \left\| (\Phi^j_I)^T\Phi^j_S(\Phi^j_S)^{\dagger}\Phi^j_Ux^j_U \right\| \right)\\
&= \displaystyle \sum_j \left[ (I) + (II) \right],
\end{aligned}
$$
where
$$
\begin{array}{ccl}
(I) &\leq& \delta^j_{1+|U|}\left\| x^j_U\right\| \leq \delta^j_{K+1}\left\| x^j_U\right\|\\
(II) & \leq& \displaystyle\frac{(\delta^j_{K+1})^2}{1-\delta^j_{K+1}}\left\| x^j_U\right\| (\cite{Kwon2014}).
\end{array}
$$
\end{proof}
Plugging $(I)$ and $(II)$ into above inequality, we can obtain:
\begin{equation}
\label{eq: the upper bound of SOMP-a+c}
\displaystyle \sum_j\left\| (\Phi^j_I)^Tr^j\right\| \leq \displaystyle\sum_j\left(\delta^j_{K+1}+\frac{(\delta^j_{K+1})^2}{1-\delta^j_{K+1}} \right)\max_j\left\| x^j_U\right\|.
\end{equation}

Then, we derive the lower bound of $\displaystyle \sum_j \left\| (\Phi^j_I)^Tr^j\right\|$ as follows:
\begin{equation}
\begin{aligned}
\left\| (\Phi^j_I)^Tr^j\right\| &= \max_{i \in I}\left| \langle \phi^j_i,r^j\rangle\right| \geq \displaystyle\sqrt{\frac{1}{|\Omega|}\sum_{i \in \Omega}\left|\langle\phi^j_i,r^j\rangle\right|^2}\\
&=\frac{1}{\sqrt{K}}\left\|(\Phi^j_{\Omega})^Tr^j\right\|.
\end{aligned}
\end{equation}
Then, we have :
\begin{equation}
\label{eq: the lower bound of SOMP-a+c}
\begin{aligned}
\displaystyle & \sum_j \sqrt{K}\left\| (\Phi^j_I)^Tr^j\right\| \\
& \geq\displaystyle\sum_j\left\|(\Phi^j_{\Omega})^Tr^j\right\| =\displaystyle\sum_j\left\| (\Phi^j_U)^T(I-\Phi^j_S(\Phi^j_S)^{\dagger}\Phi^j_U)x^j_U\right\| \\
&\geq\displaystyle \sum_j\left(1-\delta^j_{K+1}\right)\left\|x^j_U\right\| \geq \sum_j\left(1-\delta^j_{K+1}\right)\min_j\left\|x^j_U\right\|.
\end{aligned}
\end{equation}
Finally, we want (\ref{eq: the lower bound of SOMP-a+c})$>$(\ref{eq: the upper bound of SOMP-a+c}) as: \\
$$
\begin{aligned}
\displaystyle\frac{1}{\sqrt{K}}\sum_j\left(1-\delta^j_{K+1}\right)& \min_j\left\|x^j_U\right\|  >  \\  &\displaystyle\sum_j\left(\delta^j_{K+1}+\frac{(\delta^j_{K+1})^2}{1-\delta^j_{K+1}}\right)\max_j\left\|x^j_U\right\|,
\end{aligned}
$$
which implies that
$$
\displaystyle \sum_j \frac{\epsilon_1\left(\delta^j_{K+1}\right)^2-\left(2\epsilon_{1}+\sqrt{K}\right)\delta^j_{K+1}+\epsilon_{1}}{1-\delta^j_{K+1}}>0,
$$
where $\epsilon_{1}=\displaystyle\max_{U\in \mathcal{P}(\Omega)\backslash\emptyset}\frac{min_j\|x^j_U\|}{\max_j\|x^j_U\|}$.
Note that the proof is independent of the iteration index $t$, and hence the condition holds at each iteration.
Thus, given correct support, the SOMP-(a+c) algorithm will perfectly reconstruct $x^i$'s.\\

\noindent \textbf{Proof of Theorem \ref{thm: SOMP-b-c}:}
\begin{proof}
The proof is similar to Theorem \ref{thm: SOMP-a-c} and we use the same notations in this proof.
We need to derive the lower bound and upper bound of $\left\| \sum_j (\Phi^j_I)^Tr^j \right\|$. First, the lower bound is shown as follows:\\
$$
\begin{aligned}
&\displaystyle \sqrt{K}\left\| \sum_j (\Phi^j_I)^Tr^j \right\| =\displaystyle \left\| \sum_j(\Phi^j_U)^T(I-\Phi^j_S(\Phi^j_S)^{\dagger})\Phi^j_Ux^j_U\right\| \\
&\displaystyle \geq \left\| \sum_j(\Phi^j_U)^T\Phi^j_Ux^j_U \right\| -  \left\| \sum_j(\Phi^j_U)^T\Phi^j_S(\Phi^j_S)^{\dagger}\Phi^j_Ux^j_U \right\| \\
&= (I)+(II).
\end{aligned}
$$
Let $x^j_U = x^*_U - c^j$, where $c^j \in \mathbb{R}^{|U|}$ is any constant and $\displaystyle x^*_U = \frac{\sum_j x^j_U}{L}$, and let $\displaystyle \epsilon_{2}=\max_{U\in \mathcal{P}{(\Omega)}\backslash \emptyset}\frac{\sum_j\|x^j_U - x^*_U\|}{L\|x^*_U\|}$. Then, we have
$$
\begin{array}{cll}
(I) &=& \displaystyle \left\| \sum_j\left[ (\Phi^j_U)^T\Phi^j_U-I+I\right]x^j_U\right\|  \vspace{+3pt} \\
&   \geq& \displaystyle \left\| \sum_jx^j_U \right\| - \left\| \sum_j\left[ (\Phi^j_U)^T\Phi^j_U-I \right](x^*_U-c^j)\right\| \\
&\geq& \displaystyle L\left\|x^*\right\| - \left\| \sum_j\left[ (\Phi^j_U)^T\Phi^j_U-I \right]x^*_U\right\| - \\
&\ & \displaystyle \left\| \sum_j\left[ (\Phi^j_U)^T\Phi^j_U-I \right]c^j \right\|.
\end{array}
$$
Since
$$\displaystyle\left\| \sum_j\left[ (\Phi^j_U)^T\Phi^j_U-I \right]x^*_U \right\|\leq L\delta_{K+1}(A)\left\|x^*_U\right\|$$
and
$$
\begin{aligned}
&\displaystyle \left\| \sum_j\left[ (\Phi^j_U)^T\Phi^j_U-I \right]c^j \right\| \\
&= \displaystyle \left\| \sum_j\left[ (\Phi^j_U)^T\Phi^j_U-I \right](x^*_U-x^j_U) \right\| \\
&\leq \displaystyle L\epsilon_{2}\left\|x^*_U\right\|\sum_j\delta^j_{|U|} \leq \displaystyle L\epsilon_{2}\left\|x^*_U\right\|\sum_j\delta^j_{K+1},
\end{aligned}
$$
we have:
$$
\begin{array}{cll}
(I) &\geq&\displaystyle L\left\|x^*_U\right\|\left(1-\delta_{K+1}(A)-\epsilon_{2}\sum_j\delta^j_{K+1}\right)\\
&\geq& \displaystyle L\left\|x^*_U\right\|\left(1-\delta_{K+1}(A)-L\epsilon_{2}\delta^{max}_{K+1}\right).
\end{array}
$$
In addition,
$$
\begin{array}{cll}
(II) &\leq& \displaystyle \sum_j \frac{\left(\delta^j_{|U|+|S|}\right)}{1-\delta^j_{|S|}}(1+L\epsilon_{2})\left\|x^*_U\right\| \\
&\leq& \displaystyle \sum_j \frac{\left(\delta^j_{K+1}\right)^2}{1-\delta^j_{K+1}}(1+L\epsilon_{2})\left\|x^*_U\right\|. \\
&\leq& \displaystyle  L\frac{\left(\delta^{max}_{K+1}\right)^2}{1-\delta^{max}_{K+1}}(1+L\epsilon_{2})\left\|x^*_U\right\|.
\end{array}
$$\\
Therefore, the lower bound of $\left\| \sum_j (\Phi^j_I)^Tr^j \right\| $ is
\begin{equation}
\frac{L\|x^*\|}{\sqrt{K}}\left[ 1-\delta_{K+1}(A)-L\epsilon_{2}\delta^{max}_{K+1}-\frac{(\delta^{max}_{K+1})^2}{1-\delta^{max}_{K+1}} \right].
\label{lower}
\end{equation}
On the other hand, the upper bound is obtained by:
$$
\begin{aligned}
&\left\| \sum_j (\Phi^j_I)^Tr^j \right\| \\
&\leq \displaystyle\left( \left\|  \sum_j (\Phi^j_I)^T\Phi^j_Ux^j_U \right\| + \left\| \sum_j  (\Phi^j_I)^T\Phi^j_S(\Phi^j_S)^{\dagger}\Phi^j_Ux^j_U \right\| \right)\\
& = (III) + (IV),
\end{aligned}
$$
where
$$
\begin{aligned}
(III) & \leq L\left\| x^*\right\| \left(\delta_{1+|U|}(A)+\epsilon_{2}\sum_j\delta^j_{1+|U|}\right)\\
&\leq L\left\| x^*\right\| \left(\delta_{K+1}(A)+\epsilon_{2}L\delta^{max}_{K+1}\right)
\end{aligned}
$$
and
$$
\begin{aligned}
(IV) &\leq \displaystyle \sum_j \frac{\left(\delta^j_{1+|S|}\right)^2}{1-\delta^j_{|S|}}\left\| x^j_U\right\|  \\
&\leq \displaystyle \frac{L\left(\delta^{max}_{K+1}\right)^2}{1-\delta^{max}_{K+1}}(1+L\epsilon_{2})\left\| x^*_U\right\|.
\end{aligned}
$$
Therefore, the upper bound of $\left\| \sum_j (\Phi^j_I)^Tr^j \right\| $ is:
\begin{equation}
L\left\| x^*_U \right\|\left[ \delta_{K+1}(A)+L\epsilon_{2}\delta^{max}_{K+1}+\frac{\left(\delta^{max}_{K+1}\right)^2}{1-\delta^{max}_{K+1}}(1+L\epsilon_{2})\right]
\label{upper}
\end{equation}
Hence, the SOMP-(b+c) algorithm will choose correct support if (\ref{lower}) $>$ (\ref{upper}), which implies\\
$$(\sqrt{K}+1)\delta_{K+1}(A)+(1+(\sqrt{K}+1)L\epsilon_{2})\delta^{max}_{K+1}<1.$$
When all support are found correctly, the SOMP-(b+c) algorithm will perfectly reconstruct $x^i$'s.
\end{proof}

\noindent \textbf{Proof of Theorem \ref{thm: SOMP-b-d}:}
\begin{proof}
In the proof, we need to derive the lower bound and upper bound of $\left\| \sum_j (\Phi^j_I)^Tr^j \right\|$.
First, the lower bound is derived as follows:
$$
\begin{aligned}
\sqrt{K}& \left\|  \sum_j (\Phi^j_I)^Tr^j\right\|  \geq \left\| \sum_j (\Phi^j_T)^Tr^j\right\| \\
&\geq \left\| \sum_j x^j_U\right\| - \left\| \sum_j \left[(\Phi^j_T)^T\Phi^j_T-I\right]x^j_U\right\|  \\
&= L\left\| x^*_U\right\| - \left\| \sum_j \left[(\Phi^j_T)^T\Phi^j_T-I\right](x^*_U-c^j)\right\| \\
&\geq L\left\| x^*_U\right\| - \left\| \sum_j \left[(\Phi^j_T)^T\Phi^j_T-I\right]x^*_U\right\| \\
&- \left\| \sum_j \left[(\Phi^j_T)^T\Phi^j_T-I\right]c^j\right\| \\
& = L\left\| x^*_U\right\| -(I) - (II),
\end{aligned}
$$
where
$$\displaystyle (I) \leq  \sum_j\delta^j_{|T|} \left\| x^*_U\right\|\leq \sum_j\delta^j_{K+1} \left\| x^*_U\right\|$$
and \\
$$\displaystyle (II) \leq \epsilon_{3}L\left\| x^*_U\right\| \sum_j\delta^j_{|T|} \leq \epsilon_{3}L\left\| x^*_U\right\| \sum_j\delta^j_{K+1}.$$
Therefore, we can obtain
\begin{equation}
\label{thm3_lower}
\begin{aligned}
\displaystyle  \sqrt{K} \left\|  \sum_j (\Phi^j_I)^Tr^j\right\| &>  L\left\| x^*_U\right\| - \left\| x^*_U \right\| \left( \sum_j\delta^j_{K+1}(1+\epsilon_{3}L) \right) \\
 & > L\left\| x^*_U\right\| \left( 1 -  \delta^{max}_{K+1}(1+\epsilon_{3}L)  \right).
\end{aligned}
\end{equation}
On the other hand, the upper bound is derived as:
$$
\begin{aligned}
&\left\| \sum_j(\Phi^j_I)^Tr^j \right\| \\
&\leq \left\| \sum_j(\Phi^j_I)^T\Phi^j_Tx^*_U \right\| + \left\| \sum_j(\Phi^j_I)^T\Phi^j_Tc^j \right\| \\
&= (III)+(IV),
\end{aligned}
$$
where
$$(III)=L\left\| A^T_IA_Tx^*_U\right\| \leq L\delta_{K+1}(A)\left\| x^*_U \right\|$$
and
$$(IV) \leq L^3\epsilon_{3}\delta_{K+1}(A)\left\| x^*_U \right\|.$$
Finally, we can obtain
\begin{equation}
L\left\| x^*_U\right\| \delta_{K+1}(A)(1+L^2\epsilon_{3}).
\label{thm3_upper}
\end{equation}
In sum, the SOMP-(b+d) algorithm will select correct support if $(\ref{thm3_lower})>(\ref{thm3_upper})$, implying
$$\sqrt{K}(1+L^2\epsilon_{3})\delta_{K+1}(A) + (1+L\epsilon_{3}) \delta^{max}_{K+1} < 1.$$
When all support are found correctly, SOMP-(b+d) algorithm will perfectly recover $x^i$'s.
\end{proof}

\bibliographystyle{IEEEbib}	
\bibliography{refs}		

\end{document}